\def\NAT@def@citea{\def\@citea{\NAT@separator}}
\theoremstyle{plain}
\newtheorem{theorem}{Theorem}[section]
\newtheorem{lemma}[theorem]{Lemma}
\newtheorem{proposition}[theorem]{Proposition}
\newtheorem{condition}{Condition}{}
\theoremstyle{definition}
\newtheorem{definition}[theorem]{Definition}
\theoremstyle{remark}
\begin{document}

\articletype{ARTICLE TEMPLATE}

\title{An Explicit Discrete-Time Dynamic Vehicle Model with Assured Numerical Stability}

\author{
\name{
Guojian Zhan\textsuperscript{a}, Qiang Ge\textsuperscript{a},  
Haoyu Gao\textsuperscript{a}, Yuming Yin\textsuperscript{b}, Bin Zhao\textsuperscript{c} and Shengbo Eben Li\textsuperscript{a}
}  
\affil{\textsuperscript{a}School of Vehicle and Mobility, Tsinghua University, Beijing, China;\\
\textsuperscript{b}School of Mechanical Engineering, Zhejiang University of Technology, China;\\
\textsuperscript{c}School of Electrical and Electronic Engineering, Changchun University of Technology, China.}
}
\thanks{CONTACT Shengbo Eben Li. Email: lisb04@gmail.com}

\maketitle
\begin{abstract}
Numerical stability is of great significance for discrete-time dynamic vehicle model. Among the unstable factors, low-speed singularity stands out as one of the most challenging issues, which arises from that the denominator of tire side angle term only contains the vehicle longitudinal speed. Consequently, for the common low-speed and stop-start driving scenarios, the calculated tire slip angle will approach infinity, which will further lead to the numerical explosion of other vehicle states. In response to this critical challenge, we propose a discrete-time dynamic vehicle model that effectively mitigates the low-speed singularity issue, ensuring numerical stability and maintaining the explicit form—highly favored by model-based control algorithms. To validate the numerical stability of our model, we conduct a rigorous theoretical analysis, establishing sufficient conditions for stability, and conduct extensive empirical validation tests across a wide spectrum of speeds. Subsequent to the validation process, we conduct comprehensive simulations comparing our proposed model with both kinematic models and existing dynamic models discretized through the forward Euler method. The results demonstrate that our proposed model shows better comprehensive performance in terms of both the accuracy and numerical stability. Finally, the real vehicle experiments are carried out to support that our proposed model can closely aligns to the real vehicle trajectories showcasing its practicality and ease of use. Notably, our work stands as the pioneering endeavor in introducing an explicit discrete-time dynamic vehicle model suitable for common urban driving scenarios including low-speed and stop-start.
\end{abstract}

\begin{keywords}
Dynamic vehicle model; Numerical stability; Model-based control
\end{keywords}

\section{Introduction}

In the ever-evolving landscape of transportation and mobility, the role of vehicle model has emerged as a foundational pillar in the fields of academia and industry \cite{sharp2011vehicle}. 
Understanding the behavior of vehicles within the discrete-time framework has taken on essential importance, driven by its direct relevance to controller design at general digital computers \cite{zhan2023continuous}. 
Up to this point, the modeling of a typical four-wheel ground vehicle with front steering has been approached at different scales, ranging from a basic 2-Degree-of-Freedom (DoF) lateral quarter vehicle to the full size vehicle \cite{yang2013overview}. In practical engineering, the single-track model takes the lead as the most commonly employed option. This model's brilliance lies in its strategic utilization of symmetry for simplification, and being able to distinguish motion characteristics of the front and rear axles \cite{kutluay2014validation}. That is, the single-track vehicle model strikes an effective equilibrium between computational efficiency and accuracy \cite{manrique2022analytical}. Therefore, in the subsequent content, the term ``vehicle model'' defaults to referring to the single-track vehicle model.

The vehicle models generally can be divided into two main categories: kinematic models and dynamic models \cite{kong2015kinematic}. Kinematic models concentrate exclusively on the geometric facets of motion, while dynamic models go further by encompassing factors like tire forces to achieve higher accuracy. Therefore, while the kinematic model suffices for straightforward tasks, the dynamic vehicle model takes center stage for precision-focused motion control needs.

However, for the dynamic vehicle model, the low-speed singularity has served as one of the most challenging factors of the numerical stability \cite{arnold2011numerical}. Although there have been many explicit discretization methods in mathematics, such as forward Euler method and Runge-Kutta method, none of these methods can adequately address the challenge posed by low-speed singularity \cite{biswas2013discussion}. The low-speed singularity  arises from that the denominator of tire slip angle term is solely dependent on longitudinal vehicle speed \cite{lugner2005recent}. This means that when the vehicle speed is extremely low and close to 0, it will cause the tire slip angle to approach infinity, which will result in a chain explosion of all vehicle state variables. 

Alongside the explicit discretization methods mentioned, there exists a category of implicit discretization methods, exemplified by the backward Euler method, which can provide numerical stability guarantee \cite{arnold2011numerical}. However, the resulting discrete-time vehicle model usually lacks an explicit form, necessitating iterative calculation for solving the next state at each step, significantly impairing the efficiency and failing to support the building of state transition relationships \cite{10056957}. In contrast, those model expressed explicitly as $x_{k+1}=f(x_k, u_k)$ but not implicitly as $x_{k+1}=f(x_{k}, x_{k+1}, u_k)$ are particularly favored, due to their high computational efficiency and ability to support the creation of state transition graphs \cite{li2023brlok}. Such an explicit model is crucial for model-based online optimization methods such as Model Predictive Control (MPC) \cite{li2010model} and model-based learning algorithms such as Approximate Dynamic Programming (ADP) 
 \cite{ren2023improve}. Here $x_k$ is the state, $u_k$ is the action, and the subscript $k$ denotes the time step. To summarize, none of the current dynamic vehicle model with explicit form are capable of addressing the issue of low-speed singularity, thereby lacking assurance of numerical stability.

In this paper, we propose a discrete-time dynamic vehicle model that can avoid low-speed singularity to ensure numerical stability while retaining the favored explicit form. Our key contributions can be summarized as:
\begin{itemize}
    \item (1) We propose a discrete-time dynamic vehicle model that can  achieve assured numerical stability while retaining the explicit form. Specifically, through using the backward Euler method for the lateral speed and yawrate and the forward Euler method for the other states, the denominator of tire slip angle term will no longer only contain the longitudinal vehicle speed, that is, the low-speed singularity can be avoided to enhance numerical stability. Besides, when combined with the common linear tire model, our proposed model can be expressed in the favored explicit form.
    \item (2) We provide a sufficient condition with detailed theoretical proof for ensuring the numerical stability. In addition, a bunch of numerical tests across a wide range of speeds are conducted to further validate it empirically. This assurance lays the foundation that our proposed model can be confidently embedded into model-based systems for general scenarios including low-speed and stop-start.
    \item (3) We conduct a series of simulations to claim that: compared with the kinematic model and the dynamic models with existing discretization methods, our proposed model shows a more comprehensive performance in terms of accuracy and numerical stability. In addition, our model is notably practical and user-friendly supported by an effective closed-loop control case at a typical stop-start driving scenario.
    \item (4) We further verify the effectiveness of our proposed model by real vehicle experiments. Through a series of open-loop control input of acceleration, deceleration and step angle, the distance error with the real vehicle trajectory is consistently kept within a reasonable range, which strongly evidents the practicality and the ease of use.
\end{itemize}

The subsequent sections are structured as follows: Section \ref{sec_pre} introduces the preliminaries, encompassing dynamic and kinematic single-track vehicle models along with classical discretization methods. Section \ref{sec_instability} delves into an exhaustive exposition of the instability concerns stemming from the low-speed singularity, along with the representative compromise adopted by \textit{CarSim} software in engineering. Our proposed explicit discrete-time dynamic model is outlined in Section \ref{sec_discrete}, accompanied by the details illustration of the discretization process. In Section \ref{sec_stability}, a comprehensive analysis of the model's numerical stability is presented, accompanied by a theoretical proof. Furthermore, the paper proceeds to conduct several simulations in Section \ref{sec_simulation} for the purpose of validating stability, accuracy, computational efficiency, and its suitability for stop-start driving condition. Section \ref{sec_experiment} underscores the practical utility of the proposed model by showcasing its good alignment with real vehicle trajectories. Finally, Section \ref{sec_conclusion} draws conclusions.
\section{Preliminaries}
\label{sec_pre}

This section will firstly present the commonly employed continuous-time form of the dynamic and kinematic single-track vehicle model. Then it will introduce classical discretization methods for ordinary differential equation in mathematics, showcasing two representative instances including the forward and backward Euler methods.

\subsection{Dynamic model}

The dynamic model is given in (\ref{dyna_continuous}). The six states are respectively horizontal position $X$, vertical position $Y$, yaw angle $\varphi$, longitudinal velocity $U$, lateral velocity $V$ and yawrate $\omega$. The two control inputs are longitudinal acceleration $a$ and front wheel steering angle $\delta$. All the variables are illustrated in Figure \ref{dyna}.
\begin{subequations}
\begin{align}
    \dot{x}&=f(x, u)=
    \left[\begin{array}{c}
    U \cos \varphi-V \sin \varphi \\
    U \sin \varphi+V \cos \varphi\\
    \omega \\
    a+V \omega-\frac{1}{m} F_\text{f} \sin \delta \\
    -U \omega  + \frac{1}{m} \left (F_\text{f} \cos \delta+F_\text{r}\right)\\
    \frac{1}{I_\text{z}}\left(l_\text{f} F_\text{f} \cos \delta-l_\text{r} F_\text{r}\right)
    \end{array}\right], \\
    x&=\begin{bmatrix}
    X& Y& \varphi& U& V& \omega
    \end{bmatrix}^\top, \ u=\begin{bmatrix}
    a& \delta
    \end{bmatrix}^\top.
    \label{dyna_continuous} 
\end{align}
\end{subequations}

The lateral tyre forces $F_\text{f}$, $F_\text{r}$ are determined by sideslip angle $\alpha_\text{f}$ and $\alpha_\text{r}$. Since the extreme case that with large lateral acceleration is not under consideration, it is reasonable to assume
\begin{subequations}
    \begin{align}
    F_\text{f}&=k_\text{f} \alpha_\text{f} \approx k_\text{f}\left(\frac{V+l_\text{f} \omega}{U}-\delta\right)\label{c}, \\
    F_\text{r}&=k_\text{r} \alpha_\text{r} \approx k_\text{r} \frac{V-l_\text{r} \omega}{U}\label{d}.
\end{align}
\label{sideslipangle}
\end{subequations}

\subsection{Kinematic model}
The kinematic model  \cite{polack2017kinematic} given in (\ref{kine_continuous}) assumes that both front and rear wheels have only longitudinal rolling movement, but no lateral slip. This is less accurate because tire characteristics are neglected. The four state variables are respectively horizontal position $X$, vertical position $Y$, longitudinal velocity $U$ and yaw angle $\varphi$, which are labeled in Figure \ref{kina}.
\begin{subequations}
\begin{align}
    \dot{x}_{\text{kine}}&=f_{\text{kine}}\left(x_{\text{kine}}, u\right)=\left[\begin{array}{c}
    U \cos \varphi-\frac{l_\text{r}}{l_\text{f}+l_\text{r}} U \tan \delta \sin \varphi \\
    U \sin \varphi+\frac{l_\text{r}}{l_\text{f}+l_\text{r}} U \tan \delta  \cos \varphi \\
    a \\
    \frac{1}{l_\text{f}+l_\text{r}} U \tan \delta \\
    \end{array}\right], \\
    x_\text{kine}&=\begin{bmatrix}
    X& Y& \varphi& U
    \end{bmatrix}^\top, \  
    u=\begin{bmatrix}
    a& \delta
    \end{bmatrix}^\top.
    \label{kine_continuous}
\end{align}
\end{subequations}

\begin{figure}[htbp]
    \centering
    \captionsetup[subfigure]{justification=centering}
        \subfloat[Dynamic]{\label{dyna}\includegraphics[width=0.35\textwidth]{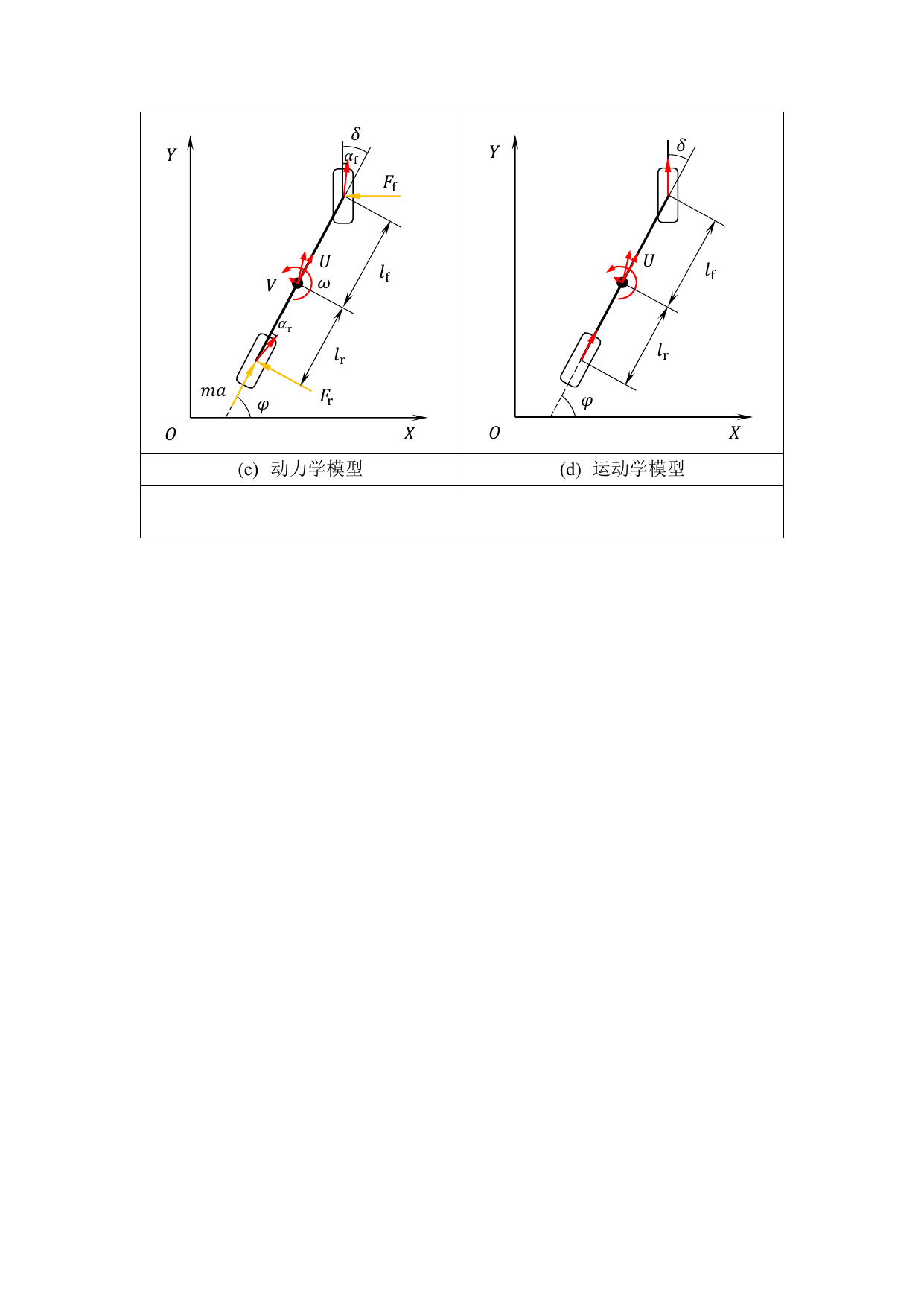}} \quad
        \subfloat[Kinematic]{\label{kina}\includegraphics[width=0.35\textwidth]{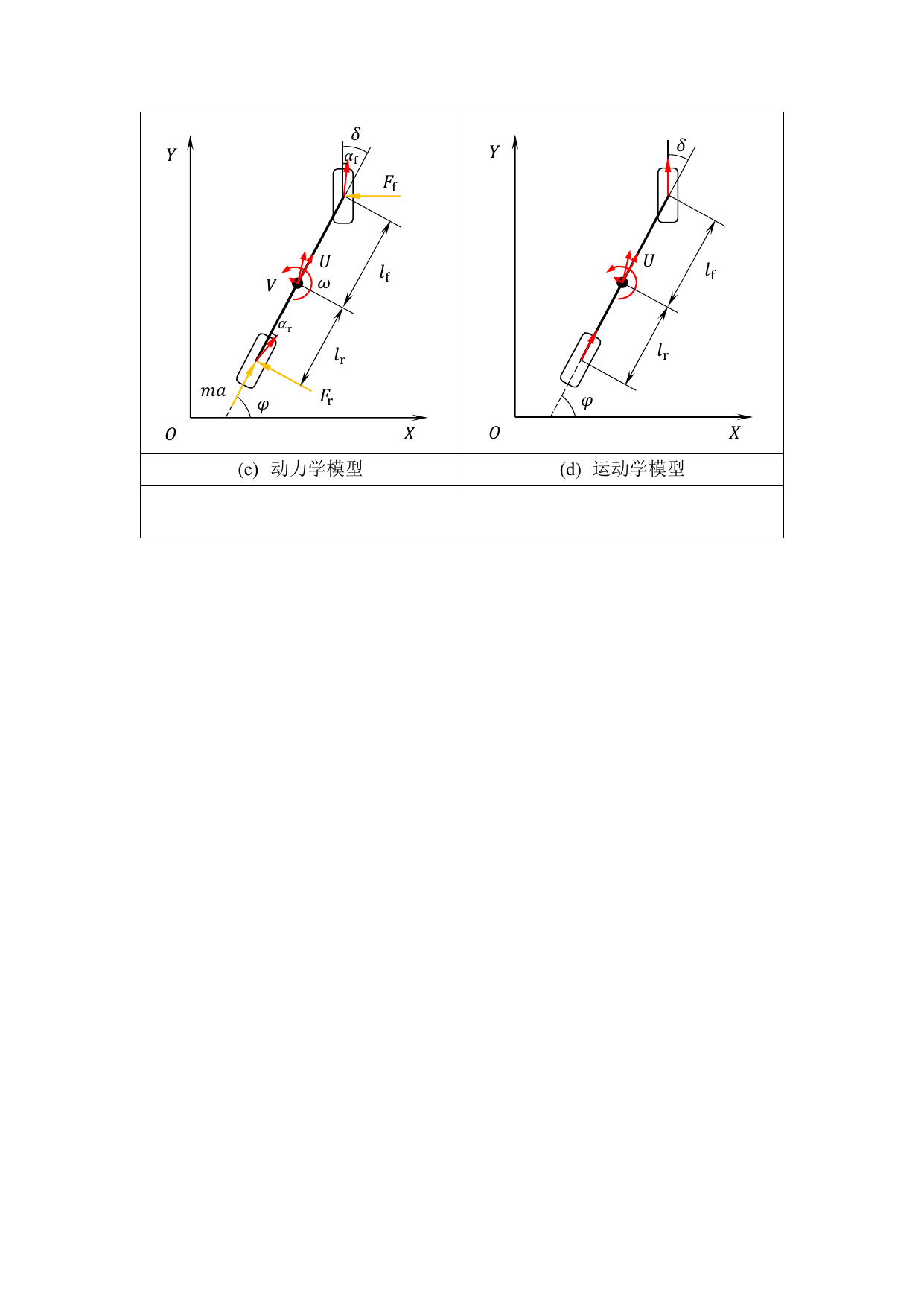}}
    \caption{Vehicle model}
\label{bicycle models}
\end{figure}

\subsection{Classical discretization methods}
In order to apply the ordinary differential equation model to design controller at digital computers, it is necessary to implement numerical discretization. As typical representatives of the explicit method and the implicit method, forward Euler method and backward Euler method are presented as shown in \eqref{forward_euler} and \eqref{back_euler}, respectively. Where footnote $k$ is the time step and $T_{s}$ is the discrete time step.
\begin{subequations}
\begin{align}
    x_{k+1}=T_\text{s} f\left(x_{k}, u_{k}\right)+x_{k},
    \label{forward_euler}\\
    x_{k+1}-T_\text{s} f\left(x_{k+1}, u_{k}\right)=x_{k}.
    \label{back_euler}  
\end{align}
\end{subequations}

As a representative implicit technique, the numerical stability of the backward Euler method is often superior. However, in implicit methods, $x_{k+1}$ needs to be determined through iterative solution of equation \eqref{back_euler}, where fixed-point iteration is typically employed. Thus, if the forward Euler method is directly applied, the discretized vehicle dynamics model may not meet the numerical stability requirements encountering low-speed driving conditions. Conversely, if the backward Euler method is directly employed, the discrete model lacks explicit form and is generally unsuitable for model-based planning or learning methods, such as MPC and ADP, to implement rollout \cite{biswas2013discussion}.
\section{Instability Challenge and Existing Compromise}
\label{sec_instability}
This section will first detail the low-speed singularity and how it leads to numerical instability, and then introduce the existing compromise adopted by the professional industrial software \textit{CarSim} to workaround such a challenge.

\subsection{Low-speed singularity}
The Low-speed singularity is a longstanding challenge for dynamic model. This issue arises from the fact that the tire slip angle's estimated term in equation (\ref{sideslipangle}) relies solely on the longitudinal vehicle speed. It's important to highlight that regardless of the chosen tire force model, be it Magic Formula \cite{pacejka1992magic}, the Fiala model \cite{lugner2005recent}, or the TMeasy model \cite{hirschberg2007tire}, the estimation of $\alpha_\text{f}$ and $\alpha_\text{r}$ using (\ref{sideslipangle}) remains. Consequently, the low-speed singularity introduced by the longitudinal speed $U$ in the denominator becomes an unavoidable factor.

In urban scenarios, stop-start is an common situation, where a vehicle will gradually slow down due to the red traffic light or stationary front vehicle and then accelerates again. As the vehicle's longitudinal velocity $U$ approaches zero, the estimations of the tire slip angles $\alpha_\text{f}$ and $\alpha_\text{r}$ may inaccurately tend towards zero, potentially leading to imprecise evaluations or even an infinite value of the lateral tire forces $F_\text{f}$ and $F_\text{r}$.  This phenomenon will ultimately result in the explosion of the vehicle's lateral states ($V$ and $\omega$).

\subsection{Existing compromise}
In the field of vehicle dynamics industrial software, \textit{CarSim} stands as one of the most professional representatives.
When precise parameters, sufficiently small time precision and cumbersome empirical calibration data are employed, the software can reliably simulate intricate nonlinear vehicle systems. 
To workaround the low-speed singularity issue, the compromise adopted by \textit{CarSim} \cite{bernard1995tire} includes:

\begin{itemize}
    \item \textbf{Step 1: Hysteresis} Define the hysteresis slip angle as a new state variable, replacing the slip angle as the input of the tire model. The hysteresis slip angle of a tire is defined as: $\alpha_{\text{L,f}}=\arctan\left(\kappa_{\text{f}}\right), \alpha_{\text{L,r}}=\arctan\left(\kappa_{\text{r}}\right).$, where $\kappa_{\text{f}}$ and $\kappa_{\text{r}}$ are the hysteresis state variables of the front and rear tires respectively. Their dynamics satisfies
    \begin{equation}
        \dfrac{\mathrm{d}\kappa_{\text{f}}}{\mathrm{d}t}=\dfrac{{V}_{\text{f}}-\kappa_{\text{f} }\left|{U}_{\text{f}}\right|}{L_{\text{y}}},
        \dfrac{\mathrm{d}\kappa_{\text{r}}}{\mathrm{d}t}=\dfrac{{V}_{\text{r}}-\kappa_{\text{r} }\left|{U}_{\text{r}}\right|}{L_{\text{y}}}.
    \end{equation}
    
    ${U}_{\text{f}},{V}_{\text{f}}$ are the tangential and normal velocities of the front wheel in the center plane of the tire; ${U}_{\text{r}},{V}_{\text{r}}$ are the counterparts of the rear wheel. $L_{\text{y}}$ is the relaxation length constant. This compromise suppresses high-frequency oscillations through low-pass filtering, and increases the low-speed feasible range.

    \item \textbf{Step 2: Lower limit bound} Introduce a low speed limit, when the longitudinal speed is lower than this value, the denominator term in the formula takes a fixed value. Introduce $U_{\text{low},\alpha}$ as speed lower bound :
    \begin{equation}
            \left|{U}_{\text{f}} \right| = \max\left(\left|U_{\text{f}}\right|, {U}_{\text{low},\alpha} \right), 
            \left|{U}_{\text{r}} \right| = \max\left(\left|U_{\text{r}}\right|, {U}_{\text{low},\alpha} \right).
    \end{equation}
\end{itemize}

In summary, \textit{CarSim} handles the low-speed singularity problem through introducing new cumbersome state variable and modeling in pieces. 
As mentioned, such a implicit form compromise may have a good effect in offline simulation, but it does not meet the requirements of model-based algorithms.
\section{Method}\label{sec_discrete}

This section will firstly provide an in-depth explanation of the derivation process of our proposed dynamic vehicle model, highlighting its emphasis on enhancing numerical stability. Subsequently, it will present its complete explicit expression.

\subsection{Highly stable explicit discretization}

This subsection dives into dealing with the low-speed singularity issue by designing novel discretization process. Our key idea is that taking advantage of the explicit form of the forward Euler method and the numerical stability of backward Euler method, to design an explicit dynamic vehicle model with assured numerical stability.

Recall that the main defect of backward Euler method, as a typical implicit method, is that obtaining $x_{k+1}$ involves solving a root-finding problem (\ref{back_euler}) at each step, usually using fixed-point iteration. In order to avoid solving such a root-finding problem and retain the explicit form, we derive the next state $x_{k+1}$ in a variable-by-variable manner. Specifically, 
$X_{k+1}$, $Y_{k+1}$ and $\varphi_{k+1}$ are calculated by forward Euler method since their  derivatives do not contain themselves. In other words, there is no risk of numerical divergence for these three variables.
$U_{k+1}$ is also derived by forward Euler method after simplifying $v \omega-\frac{1}{m} F_\text{f} \sin \delta$ as zero. This can be regarded as neglecting the longitudinal resistance force ($-\frac{1}{m} F_\text{f} \sin \delta \rightarrow 0$) first, and then substituting local (vehicle-attach) coordinate for ground coordinate system, i.e., take $a \rightarrow \dot{U}$ instead of $a + V \omega \rightarrow \dot{U}$.

After that, $V_{k+1}$ and $\omega_{k+1}$ are solved in an implicit manner as:
\begin{subequations}
    \begin{align}
    V_{k+1} &= V_k + T_\text{s}f_5\left([X_k, Y_k, \varphi_k, U_k, V_{k+1}, \omega_k]^\top, u_k\right), \\
    \omega_{k+1} &= \omega_k + T_\text{s}f_6\left([X_k, Y_k, \varphi_k, U_k, V_k, \omega_{k+1}]^\top, u_k\right),
    \end{align}\\
    \label{eq: back_last_two_variable}
\end{subequations}
where the subscript $i$ of $f_i$ denotes its $i$-th component. Note that all $k+1$-step variables but the unsolved $V_{k+1}$, $\omega_{k+1}$ are approximated by $k$-step states respectively, compared to the standard backward Euler formulation. Subtly the above equations \eqref{eq: back_last_two_variable} have explicit expressions when employing linear tyre model, which can accommodate most common driving situations. 

Technically the discretization is neither a standard forward nor backward Euler method, but a variant inspired by them. It is the adoption of backward conception that makes the model stable, which will be theoretically proved in Section \ref{sec_stability}.

\subsection{Highly stable explicit dynamic vehicle model}

For the first four state variables $[X, Y, \varphi, U]^\top$ using forward Euler method, their updating expressions are
\begin{equation}
    \left[\begin{array}{c}
    X_{k+1} \\
    Y_{k+1} \\
    \varphi_{k+1} \\
    U_{k+1} \\
    \end{array}\right]=\left[\begin{array}{c}
    X_{k}+T_\text{s}\left(U_{k} \cos \varphi_{k}-V_{k} \sin \varphi_{k}\right) \\
    Y_{k}+T_\text{s}\left(V_{k} \cos \varphi_{k}+U_{k} \sin \varphi_{k}\right) \\
    \varphi_{k}+T_\text{s} \omega_{k} \\
    U_{k}+T_\text{s} a_{k}
    \end{array}\right].
    \label{first_four_variable}
\end{equation}

As for the last two state variables $[V, \omega]^\top$ using a variant version of backward Euler method, we first calculate the tyre side slip forces as \begin{subequations}
    \begin{align}
    \tilde{F}_{\text {f}}& \coloneqq F_{\text {f}}\left({U}_{k}, {V}_{k+1}, \omega_k, \delta_k\right)=k_{\mathrm{f}}\left(\frac{{V}_{k+1}+l_{\mathrm{f}} \omega_k}{{U}_{k}}-\delta_k\right),\\
    \tilde{F}_{\text {r}}& \coloneqq F_{\text {r}}\left({U}_{k}, {V}_{k+1}, \omega_k\right)=k_{\mathrm{r}}\left(\frac{{V}_{k+1}-l_{\mathrm{r}} \omega_k}{{U}_{k}}\right), \\
    \hat{F}_{\text {f}}& \coloneqq F_{\text {f}}\left({U}_{k}, {V}_{k}, \omega_{k+1}, \delta_k\right)=k_{\mathrm{f}}\left(\frac{{V}_{k}+l_{\mathrm{f}} \omega_{k+1}}{{U}_{k}}-\delta_k\right),\\
    \hat{F}_{\text {r}}& \coloneqq F_{\text {r}}\left({U}_{k}, {V}_{k}, \omega_{k+1}\right)=k_{\mathrm{r}}\left(\frac{{V}_{k}-l_{\mathrm{r}} \omega_{k+1}}{{U}_{k}}\right).
    \end{align}
    \label{eq.tyre_forces}
\end{subequations}
Then following \eqref{eq: back_last_two_variable}, we can express the updating of $[V, \omega]^\top$ as
    \begin{subequations}
        \begin{align}
        V_{k+1}=V_{k} + T_\text{s}\left(-U_k \omega_k+\frac{1}{m}\left(\tilde{F}_\text{f}(U_k,V_{k+1},\omega_k,\delta_{k})+\tilde{F}_\text{r}(U_k,V_{k+1},\omega_k)\right)\right), \\
        \omega_{k+1}=\omega_{k} +T_\text{s} \left( \frac{1}{I_\text{z}}\left(l_\text{f} \hat{F}_\text{f}(U_k,V_k,\omega_{k+1},\delta_{k}) -l_\text{r} \hat{F}_\text{r}(U_k,V_k,\omega_{k+1})\right)\right).
        \end{align}\\
        \label{eq: back_V_and_omega_expand}
    \end{subequations}
By substituting \eqref{eq.tyre_forces} into \eqref{eq: back_V_and_omega_expand}, we can acquire the expanded form as
\begin{equation}
    \left[\begin{array}{c}
    V_{k+1} \\
    \omega_{k+1}
    \end{array}\right]=\left[\begin{array}{c}
    \dfrac{m U_{k} V_{k}+T_\text{s}\left(l_\text{f}k_\text{f}-l_\text{r} k_\text{r}\right) \omega_{k}-T_\text{s} k_\text{f} \delta_{k} U_{k}-T_\text{s} m U_{k}^{2} \omega_{k}}{m U_{k}-T_\text{s}\left(k_\text{f}+k_\text{r}\right)} \\
    \dfrac{I_\text{z} U_{k} \omega_{k}+T_\text{s}\left(l_\text{f} k_\text{f}-l_\text{r} k_\text{r}\right) V_{k}-T_\text{s} l_\text{f} k_\text{f} \delta_{k} U_{k}}{I_\text{z} U_{k}-T_\text{s}\left(l_\text{f}^{2} k_\text{f}+l_\text{r}^{2} k_\text{r}\right)}
    \end{array}\right].
    \label{last_two_variable}
\end{equation}

Finally, by combining \eqref{first_four_variable} and \eqref{last_two_variable}, the complete form of our proposed explicit dynamic vehicle model is
\begin{equation}
    \left[\begin{array}{c}
    X_{k+1} \\
    Y_{k+1} \\
    \varphi_{k+1} \\
    U_{k+1} \\
    V_{k+1} \\
    \omega_{k+1}
    \end{array}\right]=\left[\begin{array}{c}
    X_{k}+T_\text{s}\left(U_{k} \cos \varphi_{k}-V_{k} \sin \varphi_{k}\right) \\
    Y_{k}+T_\text{s}\left(V_{k} \cos \varphi_{k}+U_{k} \sin \varphi_{k}\right) \\
    \varphi_{k}+T_\text{s} \omega_{k} \\
    U_{k}+T_\text{s} a_{k} \\
    \dfrac{m U_{k} V_{k}+T_\text{s}\left(l_\text{f}k_\text{f}-l_\text{r} k_\text{r}\right) \omega_{k}-T_\text{s} k_\text{f} \delta_{k} U_{k}-T_\text{s} m U_{k}^{2} \omega_{k}}{m U_{k}-T_\text{s}\left(k_\text{f}+k_\text{r}\right)} \\
    \dfrac{I_\text{z} U_{k} \omega_{k}+T_\text{s}\left(l_\text{f} k_\text{f}-l_\text{r} k_\text{r}\right) V_{k}-T_\text{s} l_\text{f} k_\text{f} \delta_{k} U_{k}}{I_\text{z} U_{k}-T_\text{s}\left(l_\text{f}^{2} k_\text{f}+l_\text{r}^{2} k_\text{r}\right)}
    \end{array}\right].
    \label{back_euler_bicycle}
\end{equation}
\section{Numerical Stability Analysis}\label{sec_stability}
This section will firstly introduce the definition of numerical stability, followed by an analysis from error propagation perspective. Ultimately, it will provide a theoretical proof of the sufficient  numerical stable condition of our proposed dynamic model.
\subsection{Definition of numerical stability}

For the dynamic model in \eqref{back_euler_bicycle}, The first three state variables $[{X}_{k}$, ${Y}_{k}$, $\varphi_k]^\top$ are essentially the integral of the other three variables $[{U}_{k}$, ${V} _{k}$, $\omega_k]^\top$, so the intrinsic characteristics of the system can be fully described by only three variables, i.e., ${U}_{k}$, ${V}_{k}$ and $\omega_k$, since there are only three degrees of freedom in total. For the sake of simplicity, the three-variable dynamical system \eqref{simplified_back_euler_bicycle} will be used to replace \eqref{back_euler_bicycle}, and the notion $f(\cdot,\cdot)$ will be retained.

\begin{equation}
    \left[\begin{array}{c}
    U_{k+1} \\
    V_{k+1} \\
    \omega_{k+1}
    \end{array}\right]=\left[\begin{array}{c}
    U_{k}+T_\text{s} a_{k} \\
    \dfrac{m U_{k} V_{k}+T_\text{s}\left(l_\text{f}k_\text{f}-l_\text{r} k_\text{r}\right) \omega_{k}-T_\text{s} k_\text{f} \delta_{k} U_{k}-T_\text{s} m U_{k}^{2} \omega_{k}}{m U_{k}-T_\text{s}\left(k_\text{f}+k_\text{r}\right)} \\
    \dfrac{I_\text{z} U_{k} \omega_{k}+T_\text{s}\left(l_\text{f} k_\text{f}-l_\text{r} k_\text{r}\right) V_{k}-T_\text{s} l_\text{f} k_\text{f} \delta_{k} U_{k}}{I_\text{z} U_{k}-T_\text{s}\left(l_\text{f}^{2} k_\text{f}+l_\text{r}^{2} k_\text{r}\right)}
    \end{array}\right].
    \label{simplified_back_euler_bicycle}
\end{equation}

\begin{definition}
(Numerical stability of vehicle model).
For the system described by \eqref{simplified_back_euler_bicycle}, if for any $x_0 \in [x_{\min}, x_{\max}]$, $y_0 \in [x_{\min}, x_{\max}]$ and given control sequence $u_0, u_1, \cdots, u_{k-1}$, there exists a constant $C \in \mathbb{R}$ such that $\lim_{k \rightarrow \infty}\left\|x_k-y_k\right\| \leq C$, then the discrete dynamic system $x _{k+1}=f\left({x}_k, u_k\right)$ can be called numerically stable. Here, the $x_k$ and $y_k$ are the $k$th state, staring from the initial states $x_0, y_0$ respectively under the identical control sequence $u_0, u_1, \cdots, u_{k-1}$.
\label{definition 2-1}
\end{definition}

As shown in the Figure \ref{stab_illus}, a physical interpretation of the numerical stability is given. The plane represents the state space, and $\varepsilon_0=x_0-y_0$ is the error vector. It can be seen that this error vector moves in the state space as the control sequence $u_0, u_1, \cdots, u_{k-1}$ executing. The definition \ref{definition 2-1} requires that, with $k\rightarrow \infty$, $\varepsilon_\infty$ always remains within a sphere of bounded radius $C$. In fact, any different starting point in the domain can be regarded as a disturbance relative to a certain starting point; and the trajectory point at any time can be regarded as the starting point of the subsequent trajectory. In other words, this definition can be understood as requiring bounded nature of the error propagation under any potential disturbances.

\begin{figure}[ht]
  \centering
  \includegraphics[width=0.25\linewidth]{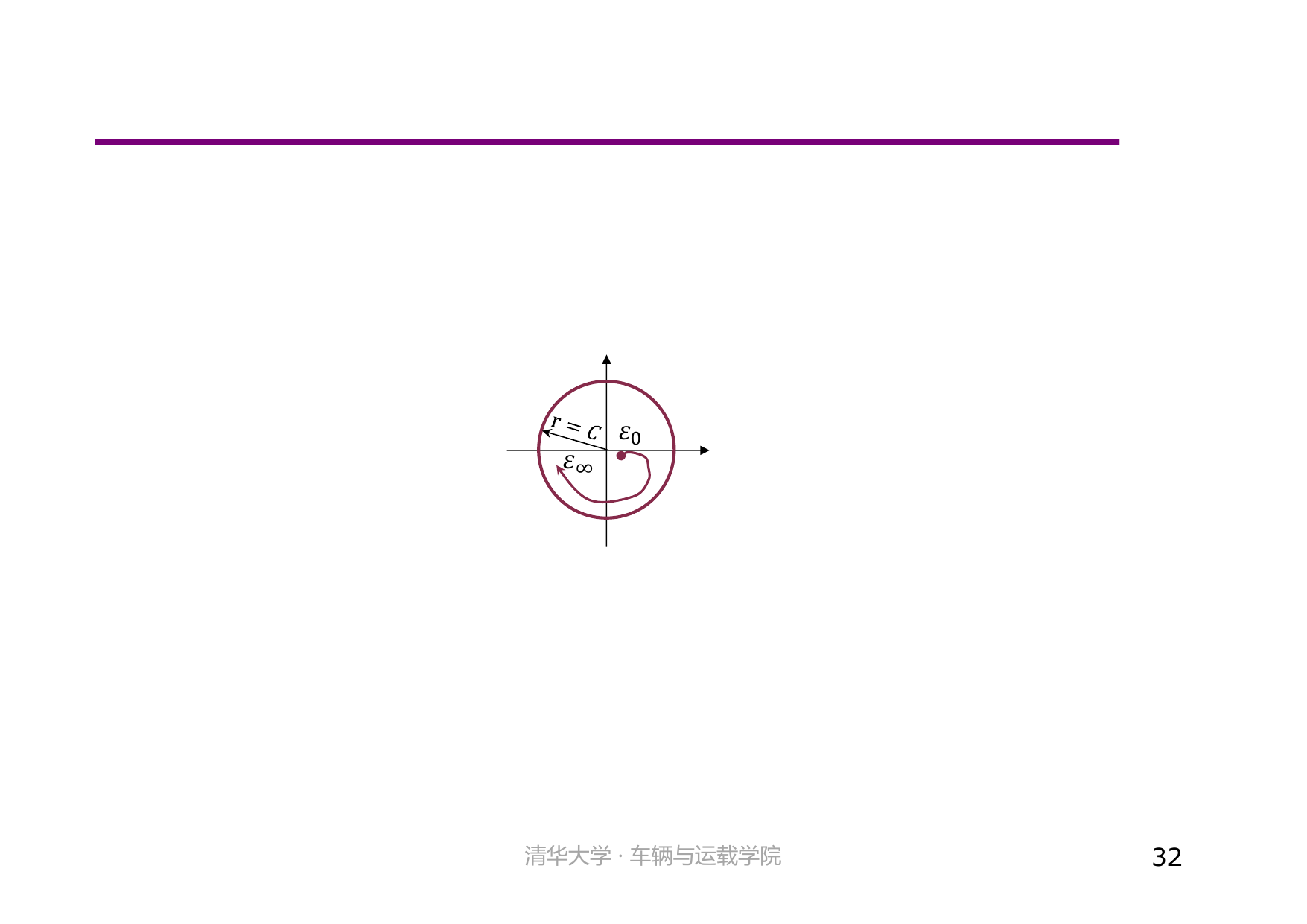}
  \caption{The physical meaning of numerical stability definition.}
  \label{stab_illus}
\end{figure}

\subsection{Analysis from error propagation perspective}
In order to better analyze the numerical stability property from error propagation perspective, this subsection expresses the recursive process of the trajectory equivalently as a series of multiplicative forms of different linear matrices. 
Specifically, The nonlinear recurrence equation \eqref{simplified_back_euler_bicycle} can be viewed as a reference point $\left(x_k, u_k, x_{k+1}\right)$ on the following nonlinear continuous function $y = f(x,u)$, where $x$, ${u}$ is the independent variable, and $y$ is the dependent variable.

\begin{lemma}[Lagrange Mean Value Theorem]
     \label{lemma1}
     Let $g:[c, d] \rightarrow \mathbb{R}$ be a continuous function on the closed interval $\left[c,d\right]$, and in the open interval $\left(c,d\right)$ upperly differentiable, $c<d$. On $\left(c,d\right)$, there exists a number $m$ satisfying
     \begin{equation}
         g^{\prime}(m)=\frac{f(d)-f(c)}{d-c}.
         \nonumber
     \end{equation}
\end{lemma}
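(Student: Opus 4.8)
The plan is to reduce the statement to Rolle's theorem by subtracting off the secant line, so that the desired derivative condition becomes a vanishing-derivative condition. First I would introduce the auxiliary function
\begin{equation}
    h(x) = g(x) - g(c) - \frac{g(d)-g(c)}{d-c}(x-c),
    \nonumber
\end{equation}
which inherits continuity on $[c,d]$ and differentiability on $(c,d)$ from $g$, since it differs from $g$ only by an affine term. A direct evaluation gives $h(c)=0$ and $h(d)=0$, so $h$ takes equal values at the two endpoints.

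Next I would invoke Rolle's theorem on $h$: a function that is continuous on a closed interval, differentiable on its interior, and equal at the two endpoints must possess an interior point $m \in (c,d)$ at which $h'(m)=0$. Differentiating the definition of $h$ yields
\begin{equation}
    h'(x) = g'(x) - \frac{g(d)-g(c)}{d-c},
    \nonumber
\end{equation}
so the condition $h'(m)=0$ rearranges immediately to $g'(m) = \frac{g(d)-g(c)}{d-c}$, which is precisely the claim. Note that I work throughout with $g$ on the right-hand side, repairing the apparent typographical slip in the stated lemma.

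The substantive content therefore lives entirely in Rolle's theorem, and that is where I expect the main obstacle to be if one is not permitted to cite it directly. Rolle's theorem itself rests on two pillars: the Extreme Value Theorem, guaranteeing that a continuous function on the compact interval $[c,d]$ attains a global maximum and a global minimum, and Fermat's stationary-point lemma, guaranteeing that an interior extremum of a differentiable function is a critical point. The Extreme Value Theorem is the genuinely hard step, since it relies on the completeness of $\mathbb{R}$ (via Bolzano--Weierstrass or the least-upper-bound property); once an extremum is located in the interior one applies Fermat's lemma to conclude, while if both extrema fall on the endpoints then $h$ is constant on $[c,d]$ and any interior point serves as $m$. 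For the purposes of this paper I would simply cite Rolle's theorem as a standard result and keep the exposition to the two-line auxiliary-function argument above, as the lemma is invoked here only as a tool for the subsequent error-propagation analysis.
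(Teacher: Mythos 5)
Your proof is correct: the auxiliary-function reduction to Rolle's theorem is the canonical argument for the Lagrange Mean Value Theorem, and you rightly repair the typographical slip in the statement (the difference quotient should read $\frac{g(d)-g(c)}{d-c}$, with $g$ rather than $f$). The paper itself offers no proof of this lemma---it is quoted as a standard textbook result whose only role is to justify the componentwise mean-value expansion in the subsequent error-propagation analysis---so your suggestion to cite Rolle's theorem and keep only the two-line auxiliary-function argument matches the paper's treatment exactly.
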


Since $u_k$ is often known and bounded, it can be regarded as a time-varying parameter. 
Consequently, $x_{k+1}$ becomes a function solely reliant on the independent variable $x_k$. As indicated by Lemma \ref{lemma1}, within this continuously differentiable function, it is always possible to identify ${\tau}_k=\left[{\tau}_k^{(1)}, {\tau}_k^{(2)}, {\tau}_k^{(3)}\right]$, where ${\tau}_k^{(1)}, {\tau}_k^{(2)}, {\tau}_k^{(3)} \in[0,1]$, and this satisfies
\begin{subequations}
    \begin{align}
        y-x_{k+1} &={A}_k\left(\tau_k\right)\left(x-x_k \right), \label{eq 2-22} \\
        {A}_k\left(\tau_k\right)&=\left[\begin{array}{l}
     \left.\frac{\partial f^{(1)}}{\partial x^{\top}}\right|_{x=x_k+\tau_k^{(1)}\left(x-x_k\right)} \\
     \left.\frac{\partial f^{(2)}}{\partial x^{\top}}\right|_{x=x_k+\tau_k^{(2)}\left(x-x_k\right)} \\
     \left.\frac{\partial f^{(3)}}{\partial x^{\top}}\right|_{x=x_k+\tau_k^{(3)}\left(x-x_k\right)} 
\end{array}\right] \label{eq 2-23}
    \end{align}
\end{subequations}

where $\left.\frac{\partial f^{(j)}}{\partial x^{\top}}\right|_{x=x_k+{ {\tau}}_k^{(j)}\left(x-x_k\right)}$ is the Jacobian matrix of the $j$-th component of the function $f\left(x, u =u_k\right)$ at $x_k+\tau_k^{(j)}\left(x-x_k\right)$, $j=1,2,3$. It should be pointed out that since the Lagrange Mean Value Theorem of real
functions does not hold for vector functions, it is not guaranteed to find the vector $\hat{\tau} \in[0,1]$, satisfying
$
     y-x_{k+1}=\left.\frac{\partial f}{\partial x^{\top}}\right|_{x=x_k+\hat{\tau}\left(x-x_k\right)}\left(x-x_k\right).
     \nonumber
$
Therefore, it is necessary to introduce the vector $\tau_k$ and use the Lagrange Mean Value Theorem three times individually to obtain the equation \eqref{eq 2-22}. Let's denote $(x-x_k)$ as $\varepsilon_k$. Notice that in the process of solving the numerical solution of the differential equation, the repeated call of the equation \eqref{simplified_back_euler_bicycle} is actually equivalent to the dependent variable $y$ being continuously input into the function $f(x,u)$. Therefore, $(y-x_{k+1})$ can be denoted as $\varepsilon_{k+1}$, the equation \eqref{eq 2-22} becomes
\begin{equation}
    \varepsilon_{k+1}={A}_k\left(\tau_k\right) \varepsilon_k.
    \label{eq 2-24}
\end{equation}

The equation \eqref{eq 2-24} describes the disturbance on state it will lead to the next time step, following the system dynamics \eqref{simplified_back_euler_bicycle}. And the amplification relationship between disturbances is determined by ${A}_k\left(\tau_k\right)$. According to the definition of the Jacobian matrix, we have
\begin{equation}
    {A}_k\left(\tau_k\right)
      \coloneqq\left[\begin{array}{cc}
     1 &\textbf{0} \\
     b_k\left(\tau_k\right) & \hat{{A}}_k\left(\tau_k\right)
     \end{array}\right].
     \label{eq 2-25}
\end{equation}
Then further expand to get:
\begin{equation}
     {A}_k\left(\tau_k\right)=\left[\begin{array}{ccc}
     1 & 0 & 0 \\
     b^{(1)}_{k}\left(\tau_k^{(2)}\right) & \dfrac{m {U}_{k}\left({{\tau}}_k^{(2)}\right)}{m {U}_{k}\left(\tau_k^{(2)}\right)-T_\text{s}\left(k_{\mathrm{f}}+k_\text{r}\right)} & \dfrac{T_\text{s}\left(l_\text{f} k_\text{f}-l_{\mathrm{r}} k_\text{r}\right)-T_\text{s} m {U}_{k}^2\left(\tau_k^{(2)}\right)}{m {U}_{k}\left(\tau_k^{(2)}\right)-T_\text{s}\left(k_\text{f}+k_\text{r}\right)} \\
     b^{(2)}_{k}\left(\tau_k^{(3)}\right) & \dfrac{T_\text{s}\left(l_\text{f} k_\text{f}-l_\text{r} k_\text{r}\right)}{I_\text{z} {U}_{k}\left({{\tau}}_k^{(3)}\right)-T_\text{s}\left(l_\text{f}^2 k_\text{f}+l_\text{r}^2 k_\text{r}\right)} & \dfrac{I_\text{z} {U}_{k}\left(\tau_k^{(3)}\right)}{I_\text{z} {U}_{k}\left(\tau_k^{(3)}\right)-T_\text{s}\left(l_\text{f}^2 k_{\mathrm{f}}+l_\text{r}^2 k_\text{r}\right)}
     \end{array}\right], \\
\label{eq 2-25 expand}
\end{equation}
where
\begin{equation}
\begin{aligned}
    b^{(1)}_{k}\left(\tau_k^{(2)}\right)\!=&\frac{1}{\left[m {U}_{k}\left(\tau_k^{(2)}\right)-T_\text{s}\left(k_\text{f}+k_\text{r}\right)\right]^2}\Bigg\{\left[m {V}_{k}\left(\tau_k^{(2)}\right)\right. \\
    & -\left.\left(T_\text{s} k_\text{f} \delta_k+2 T_\text{s} m {U}_{k}\left(\tau_k^{(2)}\right) \omega_k\right)\right]\left[m {U}_{k}\left(\tau_k^{(2)}\right)-T_\text{s}\left(k_\text{f}+k_\text{r}\right)\right]\\
    &-m\left[m {U}_{k}\left(\tau_k^{(2)}\right) {V}_{k}\left(\tau_k^{(2)}\right)\right. -T_\text{s} k_\text{f} \delta_k {U}_{k}\left(\tau_k^{(2)}\right) \\
    &+T_\text{s}\left(l_\text{f} k_\text{f}-l_\text{r} k_\text{r}\right) \omega_k\left(\tau_k^{(2)}\right) -T_\text{s} m {U}_{k}^2\left(\tau_k^{(2)}\right) \left.\omega_k\left(\tau_k^{(2)}\right)\right] \!\Bigg\}, \\
    \end{aligned}
\label{eq 2-26}
\end{equation}
\begin{equation}
\begin{aligned}
    b^{(2)}_{k}\left(\tau_k^{(3)}\right)=&\frac{1}{\left[I_\text{z} {U}_{k}\left(\tau_k^{(3)}\right)-T_\text{s}\left(l_\text{f}^2 k_\text{f}+l_\text{r}^2 k_\text{r}\right)\right]^2}\Bigg\{ \left[I_\text{z} {U}_{k}\left(\tau_k^{(3)}\right)\right. \\
    &-T_\text{s}\left(l_\text{f}^2 k_\text{f}+l_\text{r}^2 k_\text{r}\right)\Big]\left(I_\text{z} \omega_k\left(\tau_k^{(3)}\right)-T_\text{s} l_\text{f} k_\text{f} \delta_k\right)\\
    & -I_\text{z}\left[I_\text{z} {U}_{k}\left(\tau_k^{(3)}\right) \omega_k\left(\tau_k^{(3)}\right)+T_\text{s}\left(l_\text{f} k_\text{f}-l_\text{r} k_\text{r}\right) {V}_{k}\left(\tau_k^{(3)}\right)\right. \\
    &-\left.T_\text{s} l_\text{f} k_\text{f} \delta_k {U}_{k}\left(\tau_k^{(3)}\right)\right]\Bigg\}.
\end{aligned}
\label{eq 2-27}
\end{equation}

The three involved $\tau_k$ variables are defined as
\begin{equation}
\begin{aligned}
     {U}_k\left(\tau_k^{(j)}\right) & =U_k+\tau_k^{(j)}\left({U}-{U}_k\right), \\
     {V}_{k}\left(\tau_k^{(j)}\right) & ={V}_{k}+\tau_k^{(j)}\left({V}-{V}_{k}\right), \\
     \omega_k\left(\tau_k^{(j)}\right) & =\omega_k+\tau_k^{(j)}\left(\omega-\omega_k\right).
\end{aligned}
\nonumber
\end{equation}

According to equation \eqref{eq 2-25}, it is easy to deduce
\begin{equation}
\begin{aligned}
     {A}_k\left(\tau_k\right) \cdots {A}_1\left(\tau_1\right) {A}_0\left(\tau_0 \right)&=\left[\begin{array}{cc}
     1 & \textbf{0} \\
     b_k\left(\tau_k\right) & \hat{{A}}_k\left(\tau_k\right)
     \end{array}\right] \cdots\left[\begin{array}{cc}
     1 & \textbf{0} \\
     b_0\left(\tau_0\right) & \hat{{A}}_0\left(\tau_0\right)
     \end{array}\right] \\
     & \coloneqq\left[\begin{array}{cc}
     1 & \textbf{0} \\
     b_{k\sim 0}\left(\tau_k, \cdots, \tau_0\right) & \hat{{A}}_{k\sim 0 }\left(\tau_k, \cdots, \tau_0\right)
     \end{array}\right].
\end{aligned}
\label{eq 2-28}
\end{equation}

\subsection{Numerical stability analysis}
Numerical stability defined in Definition \eqref{definition 2-1} is an important property of a model. Note that the discrete dynamic vehicle model shown in \eqref{back_euler_bicycle} is derived using our proposed novel discretization method, and the longitudinal vehicle speed is no longer the only term in any denominators. This means that the model is no longer singular at low speeds. This subsection further gives a theoretical analysis of the numerical stability.

We first present the Condition \ref{condition 1}, which serves as a sufficient condition for numerical stability. Then we raise the Proposition \eqref{proposition 2-1}, and provide a detailed proof as follows.

\begin{condition}
     The block matrix $\hat{{A}}_k\left(\tau_k\right)$ in \eqref{eq 2-25}, for $\forall k\in \mathbb{N} $ all satisfy $\left\|\hat{{A}}_k\left(\tau_k\right)\right\| \leq 1.$
     \label{condition 1}
\end{condition}
\begin{proposition}
If the Condition \ref{condition 1} holds, then the system described by \eqref{simplified_back_euler_bicycle} fully satisfies the numerical stability under the Definition \ref{definition 2-1}.
\label{proposition 2-1}
\end{proposition}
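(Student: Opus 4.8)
The plan is to read Definition~\ref{definition 2-1} as a statement about the boundedness of the infinite matrix product in \eqref{eq 2-28} acting on the initial error $\varepsilon_0=x_0-y_0$, and to exploit the block lower-triangular structure \eqref{eq 2-25} together with Condition~\ref{condition 1}. Writing $\varepsilon_k=[\varepsilon_k^U,\hat\varepsilon_k]^\top$, the top row $[1,\mathbf{0}]$ of ${A}_k(\tau_k)$ forces $\varepsilon_k^U=\varepsilon_0^U$ for all $k$, so the longitudinal error is exactly conserved and trivially bounded. The whole problem then reduces to bounding the lateral error $\hat\varepsilon_k$, which by \eqref{eq 2-25} obeys the linear time-varying recursion $\hat\varepsilon_{k+1}=\hat{A}_k(\tau_k)\hat\varepsilon_k+b_k(\tau_k)\,\varepsilon_0^U$. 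Condition~\ref{condition 1} makes the homogeneous part non-expansive: by submultiplicativity $\|\hat{A}_{k\sim0}\|\le\prod_j\|\hat{A}_j\|\le1$, so the contribution of $\hat\varepsilon_0$ never exceeds $\|\hat\varepsilon_0\|$. The genuinely delicate task is to bound the accumulated forcing $b_{k\sim0}\,\varepsilon_0^U$, since a fixed offset $\varepsilon_0^U$ driven through a merely non-expansive (not strictly contractive) map could a priori grow linearly in $k$.

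Next I would establish the structural identity that rescues boundedness. Inspecting \eqref{eq 2-25 expand}, every entry of $I-\hat{A}_k$ carries a common factor $T_\text{s}$, so $I-\hat{A}_k=T_\text{s}M_k$ with $M_k$ depending only on $U_k(\tau)$; likewise, after the cancellations hidden in \eqref{eq 2-26}--\eqref{eq 2-27}, each component of $b_k$ also carries a factor $T_\text{s}$, so $b_k=T_\text{s}\beta_k$. Consequently $b_k=(I-\hat{A}_k)w_k$ with $w_k:=M_k^{-1}\beta_k$, the $T_\text{s}$ cancelling. The vector $w_k$ coincides with the sensitivity $\partial(V^{*},\omega^{*})/\partial U$ of the discrete steady-state cornering solution $(V^{*},\omega^{*})$ (the fixed point of \eqref{last_two_variable} at frozen $U,\delta$): differentiating that fixed-point relation in $U$ gives exactly $b=(I-\hat{A})w$. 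In general $w_k$ is a smooth, bounded function of the bounded arguments $U_k(\tau),V_k(\tau),\omega_k(\tau),\delta_k$. This identity is the linchpin, because it shows a persistent longitudinal offset enters the lateral error only through the non-expansive factor $I-\hat{A}_k$.

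With the identity in hand I would pass to the shifted error $\eta_k:=\hat\varepsilon_k-w_k\,\varepsilon_0^U$. A direct substitution collapses the forcing by telescoping and yields $\eta_{k+1}=\hat{A}_k(\tau_k)\,\eta_k+(w_k-w_{k+1})\,\varepsilon_0^U$, whence, using $\|\hat{A}_k\|\le1$,
\[
\|\eta_k\|\le\|\eta_0\|+|\varepsilon_0^U|\sum_{i=0}^{k-1}\|w_{i+1}-w_i\|,\qquad \hat\varepsilon_k=\eta_k+w_k\,\varepsilon_0^U .
\]
Thus $\|\hat\varepsilon_k\|$, and hence $\|\varepsilon_k\|$, remains bounded provided $w_k$ is uniformly bounded and has finite total variation $\sum_i\|w_{i+1}-w_i\|$; the resulting bound supplies the constant $C$ demanded by Definition~\ref{definition 2-1}.

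The main obstacle is precisely the control of $w_k$. Its boundedness requires $M_k$ to be uniformly invertible; one computes $\det M_k=\big[k_\text{f}k_\text{r}(l_\text{f}+l_\text{r})^2+mU_k^2(l_\text{f}k_\text{f}-l_\text{r}k_\text{r})\big]/(D_V D_\omega)$, where $D_V=mU_k-T_\text{s}(k_\text{f}+k_\text{r})$ and $D_\omega=I_\text{z}U_k-T_\text{s}(l_\text{f}^2k_\text{f}+l_\text{r}^2k_\text{r})$ are the two denominators of \eqref{last_two_variable}, so $w_k$ is well defined and bounded once this determinant stays away from zero (away from the critical oversteer speed). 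Importantly, Condition~\ref{condition 1} itself confines $U_k$ to a bounded set: since $\hat{A}_k^{(1,2)}$ behaves like $-T_\text{s}U_k$ at high speed, $\|\hat{A}_k\|\le1$ cannot persist for large $U_k$, so the speeds reachable under Condition~\ref{condition 1} are bounded and $w_k$ inherits a uniform bound and a Lipschitz modulus $L$ on the resulting compact operating domain. The last step is then the finite-variation estimate: because $\|w_{i+1}-w_i\|\le L\|x_{i+1}-x_i\|$ and each increment is $O(T_\text{s})$, the sum is finite for the bounded-variation accelerate/decelerate/settle profiles that describe every realistic stop-start maneuver, and it is this mild regularity of the nominal trajectory — not any strengthening of Condition~\ref{condition 1} — that closes the argument.
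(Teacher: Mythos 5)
Your route is genuinely different from the paper's, and your diagnosis of where the difficulty lies is exactly right --- but your argument, as you yourself concede in its last step, establishes a weaker statement than Proposition~\ref{proposition 2-1}. For comparison: the paper bounds the accumulated forcing head-on, expanding $b_{k-1\sim 0}=\sum_{j=0}^{k-1}\prod_{i=j+1}^{k-1}\hat{A}_i(\tau_i)\,b_j(\tau_j)$ as in \eqref{eq 2-31}, replacing all factors by the domain-wide maxima $\left\|b_*\right\|$ and $\left\|\hat{A}_*\right\|$ of \eqref{eq 2-33}--\eqref{eq 2-34} (which exist by the extreme value theorem on the compact domain), and summing the geometric series in \eqref{eq 2-32}; for $\left\|\hat{A}_*\right\|<1$ this yields $\left\|b_*\right\|/(1-\left\|\hat{A}_*\right\|)$, and the boundary case $\left\|\hat{A}_*\right\|=1$ is handled by the assertion \eqref{eq 2-34.1}. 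Your structural identity, by contrast, is sound as algebra: every entry of $I-\hat{A}_k$ visibly carries a factor $T_\text{s}$, the cancellations in \eqref{eq 2-26}--\eqref{eq 2-27} do leave a factor $T_\text{s}$ in both components of $b_k$, your determinant formula for $M_k$ checks out, and the shifted recursion $\eta_{k+1}=\hat{A}_k\eta_k+(w_k-w_{k+1})\varepsilon_0^U$ telescopes exactly as claimed. (One cosmetic caveat: the steady-state-sensitivity reading of $w_k$ holds only at the fixed point of \eqref{last_two_variable}; away from it, $b_k=(I-\hat{A}_k)w_k$ is simply your definition of $w_k$, which is harmless, but do not lean on the physical interpretation.)

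The gap is in the closing step, and it is twofold. First, your bound requires $\sum_i\left\|w_{i+1}-w_i\right\|<\infty$ and uniform invertibility of $M_k$ (staying away from the critical speed when $l_\text{f}k_\text{f}-l_\text{r}k_\text{r}<0$), and neither follows from Condition~\ref{condition 1}: Definition~\ref{definition 2-1} quantifies over arbitrary admissible control sequences, and a persistent input --- say sinusoidal steering applied forever --- gives $w_k$ infinite total variation, making your estimate vacuous precisely in the non-strictly-contractive regime that motivated the construction. You acknowledge importing ``mild regularity of the nominal trajectory,'' but that is a strengthening of the hypotheses, so what you prove is not the proposition as stated. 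Second, and more structurally: $\hat{A}_k$, $b_k$, and hence $w_k$ are evaluated at the Lagrange intermediate points $x_k+\tau_k^{(j)}(x-x_k)$ of \eqref{eq 2-23}, not on the nominal trajectory; successive intermediate points differ by terms proportional to the very errors $\varepsilon_k,\varepsilon_{k+1}$ being bounded (and the $\tau_k^{(j)}\in[0,1]$ may jump arbitrarily between steps), so the estimate $\left\|w_{i+1}-w_i\right\|\le L\left\|x_{i+1}-x_i\right\|=O(T_\text{s})$ is circular: controlling the variation of $w_k$ presupposes the boundedness you are trying to prove, and a smallness bootstrap cannot rescue it since the definition allows $\varepsilon_0$ up to the diameter of the domain. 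For what it is worth, your instinct about the danger at $\left\|\hat{A}_*\right\|=1$ is vindicated by the paper itself: the chain in \eqref{eq 2-32} degenerates to $k\left\|b_*\right\|$ there, and the step to \eqref{eq 2-34.1} is asserted rather than derived --- so the published proof is at its weakest exactly where yours is; the difference is that the paper declares the case closed, while you honestly priced in extra assumptions that Proposition~\ref{proposition 2-1} does not grant.
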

\begin{proof}
     \begin{equation}
     \begin{aligned}
     \left\|x_k-y_k\right\| & =\left\|\varepsilon_k\right\| \\
     & =\left\|{A}_{k-1}\left(\tau_{k-1}\right) \varepsilon_{k-1}\right\| \\
     & =\left \|{A}_{k-1}\left(\tau_{k-1}\right) \cdots {A}_1\left(\tau_1\right) {A}_0\left(\tau_0\right) \varepsilon_0 \right\|.
    \end{aligned}
    \end{equation}

Since $\varepsilon_0$ is the difference between two state points in the closed domain, the geometric meaning of the norm of this term is the distance between two points in the bounded space, so this item $\varepsilon_0$ must be bounded. 

We then proceed to establish the boundedness of the norm of the coefficient matrix ${A}_{k-1}\left(\tau_{k-1}\right) \cdots {A}_1\left(\tau_1\right) {A}_0\left(\tau_0\right)$. By referring to \eqref{eq 2-28}, it can be proved that both the block matrix $\hat{{A}}_{k-1\sim 0}\left({{\tau}}_{k-1}, \cdots, {{\tau}}_0\right)$ and the block vector $b_{k-1\sim 0}\left(\tau_{k-1}, \cdots, \tau_0\right)$ exhibit bounded norms. This substantiates the conclusion that the entire coefficient matrix ${A}_{k-1}\left(\tau_{k-1}\right) \cdots {A}_1\left(\tau_1\right) {A}_0\left(\tau_0\right)$ is indeed bounded. In the following, we will prove their bounded properties respectively.

\begin{itemize}
    \item  \textbf{$\left\|\hat{{A}}_{k-1\sim 0}\left(\tau_{k-1}, \cdots, \tau_0\right)\right\|$ is bounded.}

    \begin{equation}
     \begin{aligned}
     \left\|\hat{{A}}_{k-1\sim0}\left(\tau_{k-1}, \cdots, \tau_0\right)\right \| & =\left\|\hat{{A}}_{k-1}\left(\tau_{k-1}\right) \cdots \hat{{A}}_1\left(\tau_1\right) \hat{{A}}_0\left(\tau_0\right)\right\| \\
     & \leq\left\|\hat{{A}}_{k-1}\left(\tau_{k-1}\right)\right\| \cdots\left\|\hat {{A}}_1\left(\tau_1\right)\right\|\left\|\hat{{A}}_0\left(\tau_0\right)\right \| \\
     & \leq 1.
     \end{aligned}
     \label{eq 2-30}
\end{equation}

    \item \textbf{$\left\|b_{k-1\sim 0}\left(\tau_{k-1}, \cdots, \tau_0\right) \right\|$ is bounded.}

    \begin{equation}
     \begin{aligned}
     b_{k-1\sim0}\left(\tau_{k-1}, \cdots, \tau_0\right) & =b_ {k-1}\left(\tau_{k-1}\right)+\hat{{A}}_{k-1}\left(\tau_{k- 1}\right) b_{k-2\sim0}\left(\tau_{k-2}, \cdots, \tau_0\right) \\
     & =\sum_{j=0}^{k-1} \prod_{i=j+1}^{k-1} \hat{{A}}_i\left(\tau_i\right ) b_j\left(\tau_j\right).
     \end{aligned}
     \label{eq 2-31}
\end{equation}

\begin{equation}
     \begin{aligned}
     \left\|b_{k-1\sim0}\left(\tau_{k-1}, \cdots, \tau_0\right)\right\| & =\left\|\sum_{j=0}^{k-1} \prod_{i=j+1}^{k-1} \hat{{A}}_i\left({{\tau} }_i\right) b_j\left(\tau_j\right)\right\| \\
     & \leq \sum_{j=0}^{k-1} \prod_{i=j+1}^{k-1}\left\|\hat{{A}}_i\left({{\tau }}_i\right)\right\|\left\|b_j\left(\tau_j\right)\right\| \\
     & \leq \sum_{j=0}^{k-1}\left\|\hat{{A}}_*\right\|^{k-1-j}\left\|b _*\right\| \\
     & =\left\|b_*\right\| \frac{1-\left\|\hat{{A}}_*\right\|^k}{1-\left\|\hat{{A}}_*\right\|},
     \end{aligned}
     \label{eq 2-32}
\end{equation}

where
\begin{subequations}
    \begin{align}
       &\left\|b_*\right\|=\max _{x_j, y_j, \tau_j, T_\text{s}}\left\|{{ b}}_j\left(\tau_j\right)\right\|, 0 \leq j \leq k-1,
        \label{eq 2-33} \\
        &\left\|\hat{{A}}_*\right\|=\max _{x_i, y_i, \tau_i, T_\text{s}}\left\| \hat{{A}}_i\left(\tau_i\right)\right\|, j+1 \leq i \leq k-1.
\label{eq 2-34}
    \end{align}
\end{subequations}

In the derivation presented in \eqref{eq 2-32}, a sequence of mathematical tools is employed, including the Triangle Inequality and the Submultiplication of Matrix Norm. Note that $b_j\left(\tau_j\right)$ is defined in the closed domain that $x_{\text {min}}\leq x_j ,y_j\leq x_{\text {max}}$, $0\leq \tau^{(p)}_j\leq 1, p=1,2, 3$, $T_\text{s}>0$; $\hat{{A}}_i\left(\tau_i\right)$ is defined in the closed domain that $x_{\text {min}}\leq x_i,y_i\leq x_{\text {max}}$, $0\leq \tau^{(p )}_i\leq 1, p=1,2,3$, $T_\text{s}>0$. According to the Extreme value theorems of real functions, the maximum in \eqref{eq 2-33} and \eqref{eq 2-34} must exist.

The final step of the derivation in \eqref{eq 2-32} has not yet addressed the scenario where $\left\|\hat{{A}}_*\right\|$ equals exactly 1. When $\left\|\hat{{A}}_*\right\|=1$, the situation becomes:
\begin{equation}
     \left\|b_{k-1\sim0}\left(\tau_{k-1}, \cdots, \tau_0\right)\right\| \leq \left\|b_*\right\|.
\label{eq 2-34.1}
\end{equation}

Thus, for $\left\|\hat{{A}}_*\right\|\leq 1$, taking the limit on both sides of the equation \eqref{eq 2-32}, we have
\begin{equation}
     \begin{aligned}
     \lim _{k \rightarrow \infty}\left\|b_{k-1\sim0}\left(\tau_{k-1}, \cdots, {{\tau }}_0\right)\right\|
     & \leq \lim _{k \rightarrow \infty} \left\|b_*\right\| \frac{1-\left\|\hat{{A}}_*\right\| ^k}{1-\left\|\hat{{A}}_*\right\|} \\
     & = \frac{\left\|b_*\right\|}{1-\left\|\hat{{A}}_*\right\|}.
     \end{aligned}
     \label{eq 2-34.2}
\end{equation}

The equation \eqref{eq 2-34.1} and \eqref{eq 2-34.2} prove that under the condition \ref{condition 1}, there is an upper bound for the vector norm $\left\|b_{k- 1\sim0}\left(\tau_{k-1}, \cdots, \tau_0\right)\right\|$, and it has nothing to do with $k$.
\end{itemize}

It is easy to get from the definition of norm that when there is an element with infinite absolute value in the vector (or matrix), the norm of the vector (or matrix) is infinite. Therefore, using the nature of the inverse proposition, it can be proved that \eqref{eq 2-30} and \eqref{eq 2-32} determine that the matrix product of \eqref{eq 2-28} does not contain infinite elements . According to the definition of matrix multiplication, the result obtained by multiplying the matrix with the bounded initial disturbance vector $\varepsilon_0$ must also be bounded. Therefore, $C \in \mathbb{R}$ can always be found, satisfying
\begin{equation}
\begin{aligned}
& \lim _{k \rightarrow \infty}\left\|x_k-y_k\right\| \\
=& \lim _{k \rightarrow \infty}\left\|{A}_{k-1}\left(\tau_{k-1}\right) \cdots {A}_1\left(\tau_1\right) {A}_0\left(\tau_0\right) \varepsilon_0\right\| \\
\leq & C.
\end{aligned}
\label{eq 2-35}
\end{equation}
\end{proof}
Now, we have accomplished the proof of Proposition \eqref{proposition 2-1} that states a sufficient condition of the numerical stability, and points out a practical approach to implement validation, i.e, whether the $\left\|\hat{{A}}_k\left(\tau_k\right)\right\| \leq 1$ holds consistently. We will further conduct massive simulations to empirically support that the Condition \eqref{condition 1} holds across a wide range of driving situations.
\section{Simulation}\label{sec_simulation}

In this section, a vehicle prototype is chosen from \textit{CarSim}, then a series of simulation tests are conducted to claim that: (1) our model ensures numerical stability across a wide range of situations; (2) compared to the dynamic model discretized by forward Euler method, our model remains numerically stable with a larger range of discrete time step; (3) compared to the kinematic model, our model can achieve higher accuracy; (4) our model benefits from an explicit form that makes it user-friendly to be integrated into predictive controllers for effective closed-loop control for typical stop-start task.

\subsection{Vehicle prototype}
We choose a vehicle (C-Class, Hatchback 2017) in \textit{CarSim} as the prototype. All the parameters are collected through some approximate transformations (e.g. combining the two front wheels means doubling the vertical load to the equivalent single wheel and then determine the cornering stiffness by linearizations.). 
The detailed parameters are listed in Table \ref{tab: simu paras}.

\begin{table}[tbhp]
\footnotesize
\centering
\tbl{Vehicle model parameters in simulation}
{\begin{tabular}{lll}
\toprule
Parameter & Description & Value \\
\midrule
{$I_\text{z}$} & Yaw moment of inertia &  \qty{1536.7}{kg \cdot m^2}\\
{$k_\text{f}$} & Front axle equivalent sideslip stiffness & \qty{-128916}{N/rad} \\
{$k_\text{r}$} & Rear axle equivalent sideslip stiffness & \qty{-85944}{N/rad}\\
{$l_\text{f}$} & Distance between C.G. and front axle & \qty{1.06}{m}\\
{$l_\text{r}$} & Distance between C.G. and rear axle & \qty{1.85}{m}\\
{$m$} & Mass of the vehicle & \qty{1412}{kg}\\
\bottomrule
\end{tabular}}
\label{tab: simu paras}
\end{table}

\subsection{Numerical stability verification}
For the discrete system determined by the Table \ref{tab: simu paras}, the numerical simulation can be used to verify numerical stability by judging whether the Condition \ref{condition 1} holds, i.e., $\left\|\hat{{A}}_k\left(\tau_k\right)\right\| \leq 1$. 

Specifically, according to \eqref{eq 2-25}, when the vehicle model parameters $I_z$, $k_\text{f}$, $k_\text{r}$, $l_\text{f}$, $l_\text{r}$, $m$ are known, and the discrete time step $T_\text{s}$ has been determined, then $\hat{{A}}_k$ can be determined by ${U}_{k}\left({{\tau}}_k^{(2)}\right)$ and ${U}_{ k}\left({{\tau}}_k^{(3)}\right)$. By definition, ${U}_{k}\left({{\tau}}_k^{(2)}\right)$ and ${U}_{k}\left({{\tau}}_k^{(3)}\right)$ are the weighted average of the longitudinal velocity components of $x_k$ and $y_k$ respectively, and the weighting coefficient is determined by ${{ \tau}}_k^{(2)}, {{\tau}}_k^{(3)}\in [0,1]$. Since the two state points in step $k$ may be arbitrarily chosen in the definition domain, any pairs of speeds can be substituted into $\left\|\hat{{A}}_k\left({ {\tau}}_k\right)\right\|$ to check whether the value is lower than 1 consistently.

It can be seen from the Figure \ref{A_norm} that when the longitudinal speed of the simulation is kept within $0\sim25$ \qty{}{m/s}, no matter the simulation step size $T_\text{s}$ is 0.001 s, 0.01 s or 0.1 s, it is always satisfied that  $\left\|\hat{{A}}_k\left({{\tau}}_k\right)\right\|\leq 1$. So the condition \ref{condition 1} holds consistently, i.e., the model remains numerically stable for such a wide situation range.  This means that starting from any two state initial points, as long as the control sequence is selected so that the state remains within the defined domain (here only the longitudinal vehicle speed interval limit), the state difference at the corresponding time on the two trajectories can always be guaranteed to be bounded. Intuitively, there is no divergence of simulation results due to bounded initial bias. Because if the difference between states tends to infinity, it means that at least one trajectory has a certain state component that is infinite.

\begin{figure}[htbp]
      \centering
      \captionsetup[subfigure]{justification=centering}
          \subfloat[$T_\text{s}=0.001$s]{\label{A_norm_ts0.001}\includegraphics[width=0.45\textwidth]{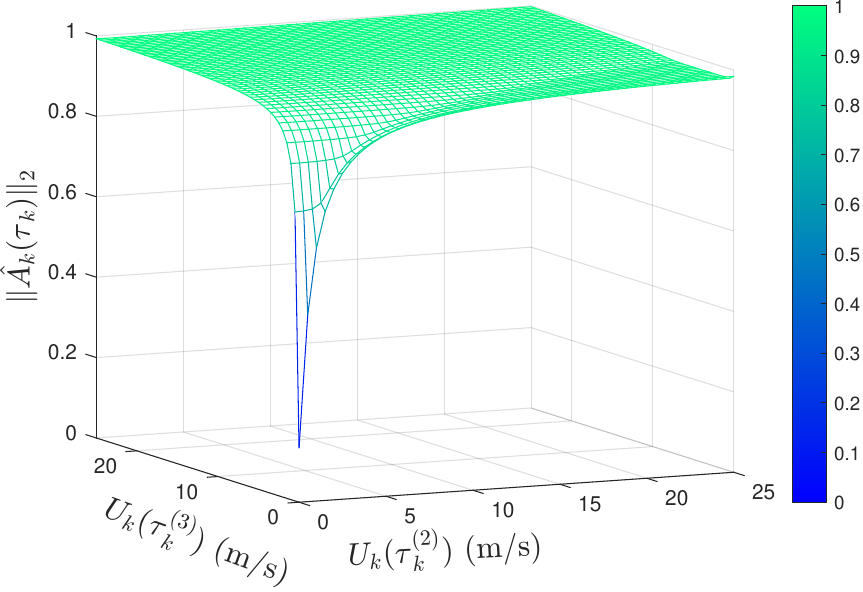}}\\
          \subfloat[$T_\text{s}=0.01$s]{\label{A_norm_ts0.01}\includegraphics[width=0.45\textwidth]{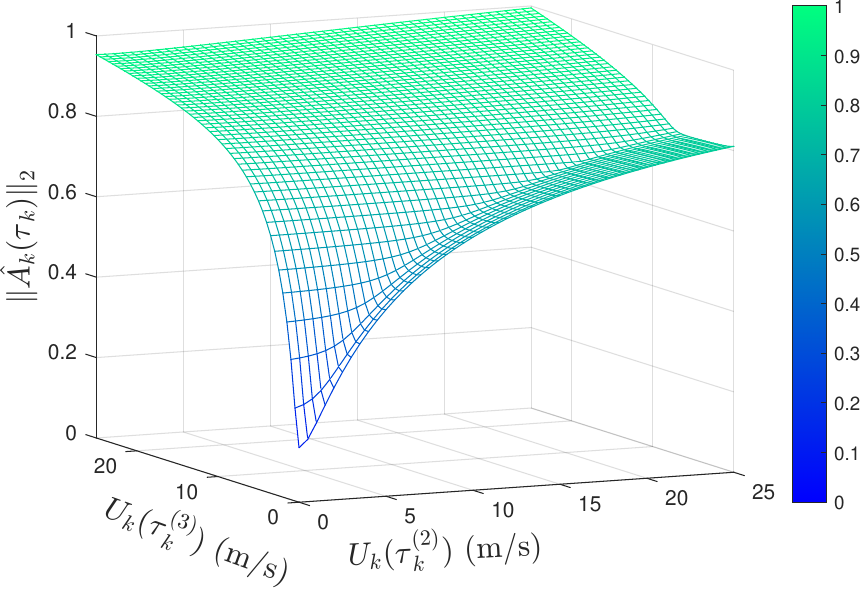}}\\
          \subfloat[$T_\text{s}=0.1$s]{\label{A_norm_ts0.1}\includegraphics[width=0.45\textwidth]{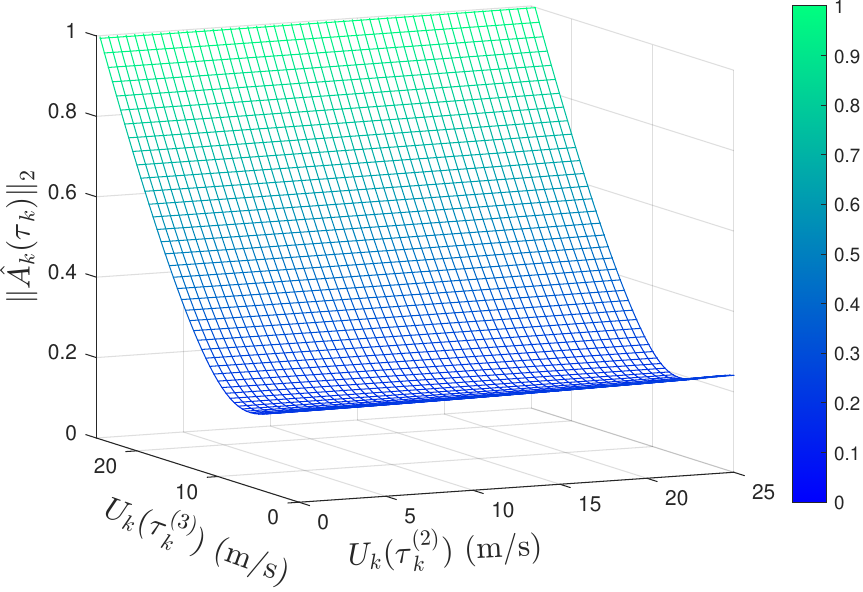}}\\
      \caption{The second norm of the sub-coefficient matrix when the longitudinal vehicle speed ranges from 0 to \qty{25}{m/s}.}
  \label{A_norm}
\end{figure}

\subsection{Time efficiency \& Robustness to noise}

In this subsection, we first conduct a computation time benchmark experiment to support the high time efficiency of our explicit dynamic vehicle model. For the experimental setup, a vehicle initialized with a velocity of $\qty{5}{m/s}$ is controlled by a step steer of $\qty{0.2674}{rad}$. We recorded the calculation times for 10,000 transition steps of both the kinematic model and our dynamic vehicle model. The results are represented in Figure \ref{time_efficiency} as boxplot visualizations. Remarkably, our model demonstrates comparable efficiency to the kinematic model, with both averaging around \qty{0.01}{ms} per transition step. This empirical evidence solidifies the high time efficiency of our explicit dynamic vehicle model.

\begin{figure}[H]
      \centering
      \captionsetup[subfigure]{justification=centering}
        {\includegraphics[width=0.52\textwidth]{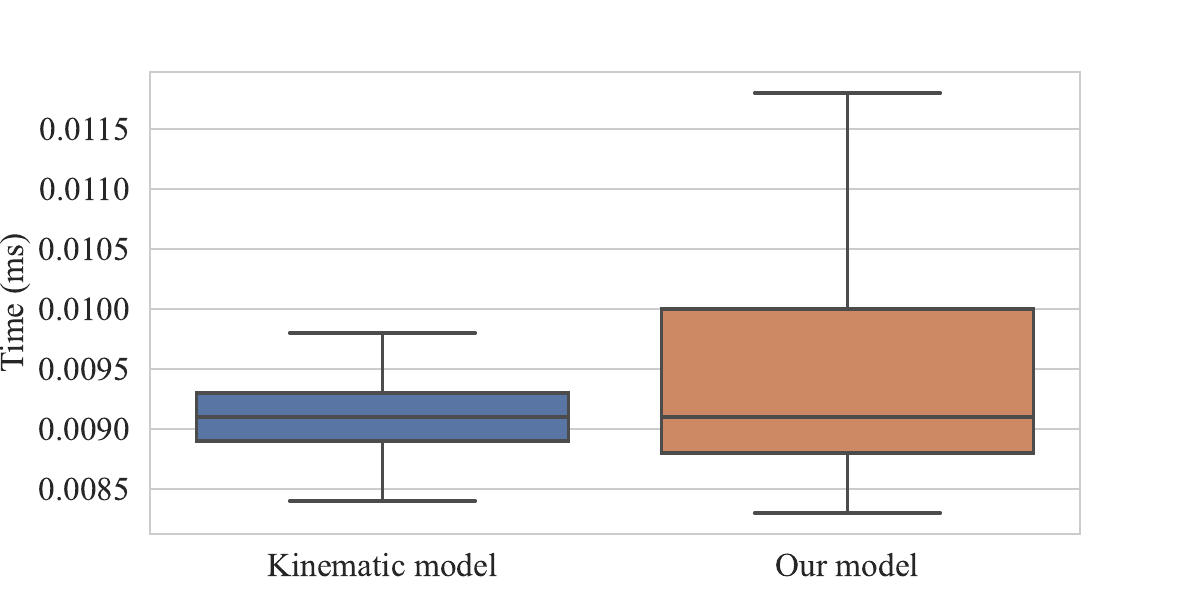}}
      \caption{Time efficiency benchmark}
  \label{time_efficiency}
\end{figure}

Secondly, robustness against disturbances or variations is also important for a realiable vehicle model. Here we conduct a robustness experiment to test whether our model can simulate realistic trajectory with the noisy input common seen in real-world. Similarly, we initialized a vehicle with a velocity of $\qty{5}{m/s}$, but introduced steering inputs of $(0.2674 + \epsilon) \qty{ }{rad}$, where $\epsilon$ follows a white noise distribution with Gaussian characteristics $\mathcal{N}(0, \sigma^2)$.
Figure \ref{robustness_test} illustrates the results for three levels of noise amplitude, i.e., $\sigma = 0.01, 0.05,  \qty{0.10}{rad}$, with the corresponding noisy steer inputs and simulated trajectories differentiated by green, blue, and orange colors, respectively. For comparison, the results of the noise-free case ($\sigma = \qty{0}{rad}$) are depicted by black dotted lines.
As the noise level increases, the deviation of the simulated trajectory grows, yet it remains remarkably close to the trajectory obtained under noise-free conditions. This observation underscores the robustness and reliability of our proposed dynamic vehicle model.

\begin{figure}[H]
      \centering
      \captionsetup[subfigure]{justification=centering}
        {\includegraphics[width=0.95\textwidth,trim={0.0cm 0.0cm 0.0cm 0.0cm}, clip]{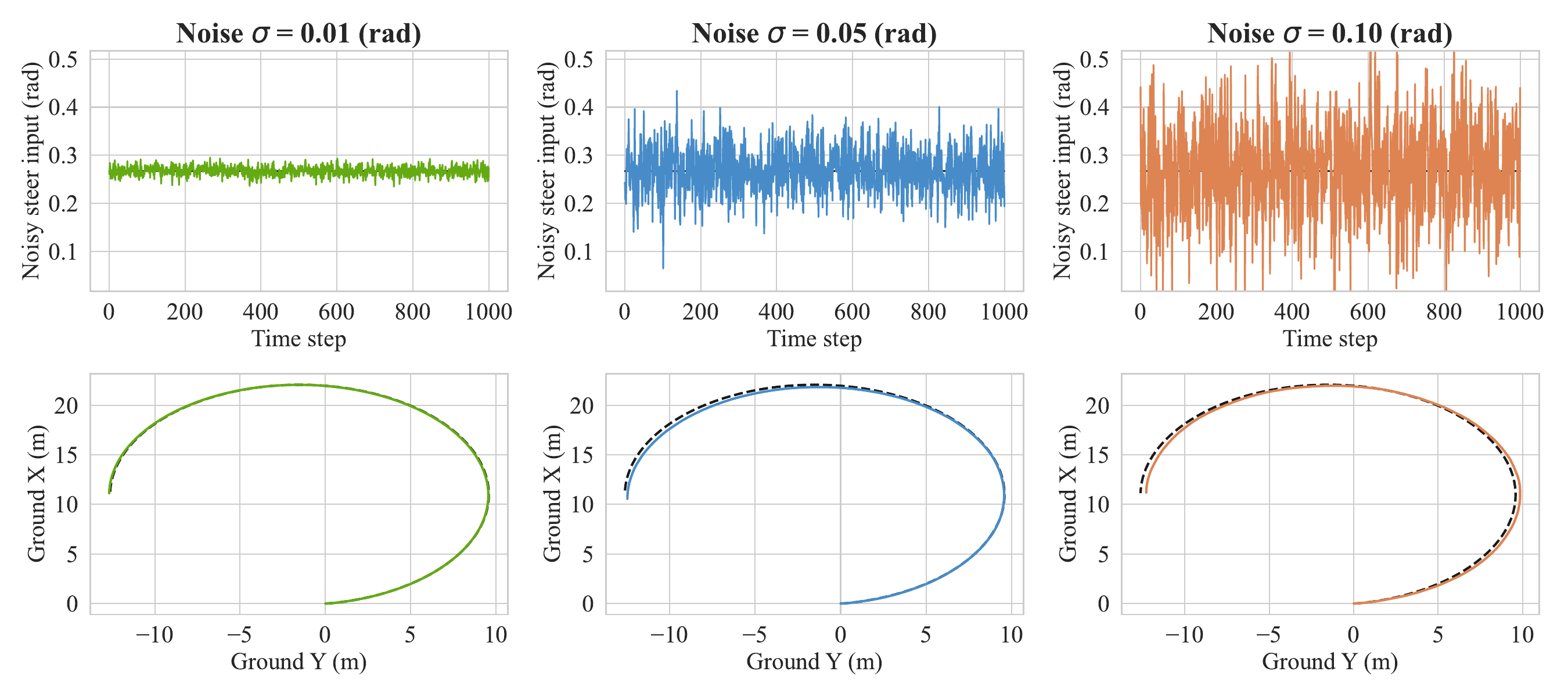}}
      \caption{Robustness test on resisting noise.}
  \label{robustness_test}
\end{figure}

\subsection{Open-loop control}

First compare the dynamic model discretized by forward Euler method. Numerical stability and insensitivity with respect to discrete time step $T_\text{s}$  and longitudinal vehicle speed $U$ are of considerable importance for the practical application of vehicle models. The open-loop control is the step steering response simulation. And the responses of our method and the dynamic model discretized by forward Euler method are calculated, at different velocities. Simultaneously, the same input is imposed on the \textit{CarSim} prototype (Adams-Moulton second-order method, $T_\text{s}$ = \qty{0.001}{s}) to serve as the groundtruth.

\begin{figure}[H]
      \centering
      \captionsetup[subfigure]{justification=centering}
          \subfloat[\ $T_\text{s}$ = 0.01 s]{\label{ts0.01}\includegraphics[width=0.9\textwidth]{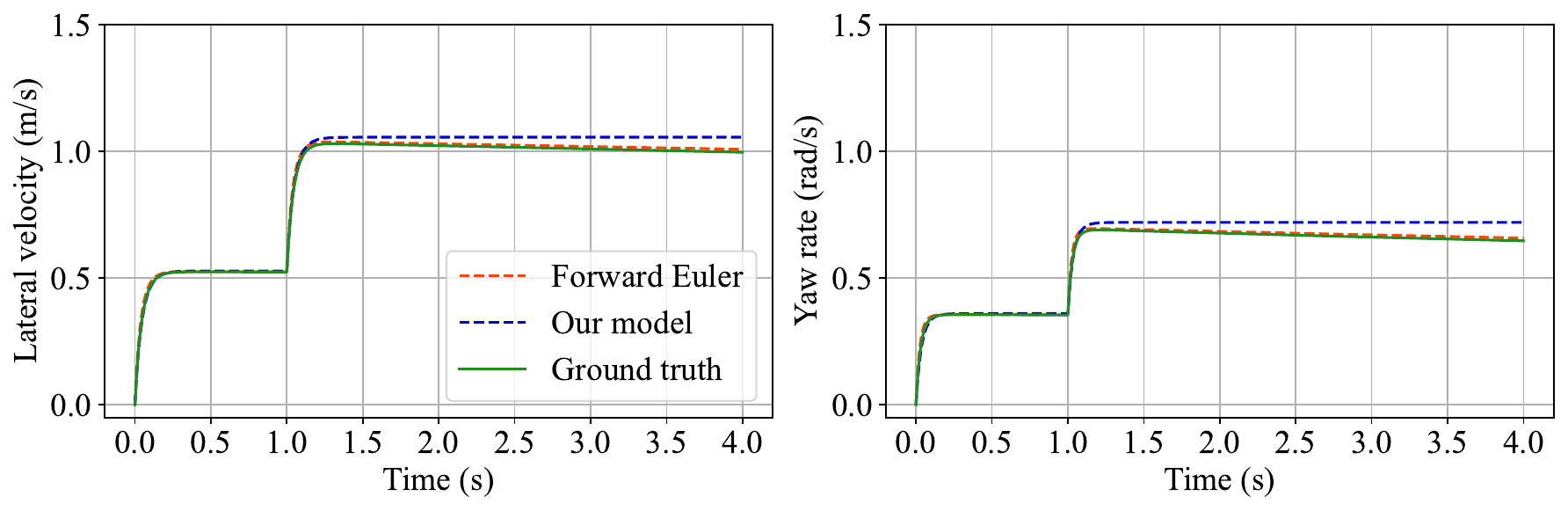}}\\
          \subfloat[$T_\text{s}$ = 0.05 s]{\label{ts0.05}\includegraphics[width=0.9\textwidth]{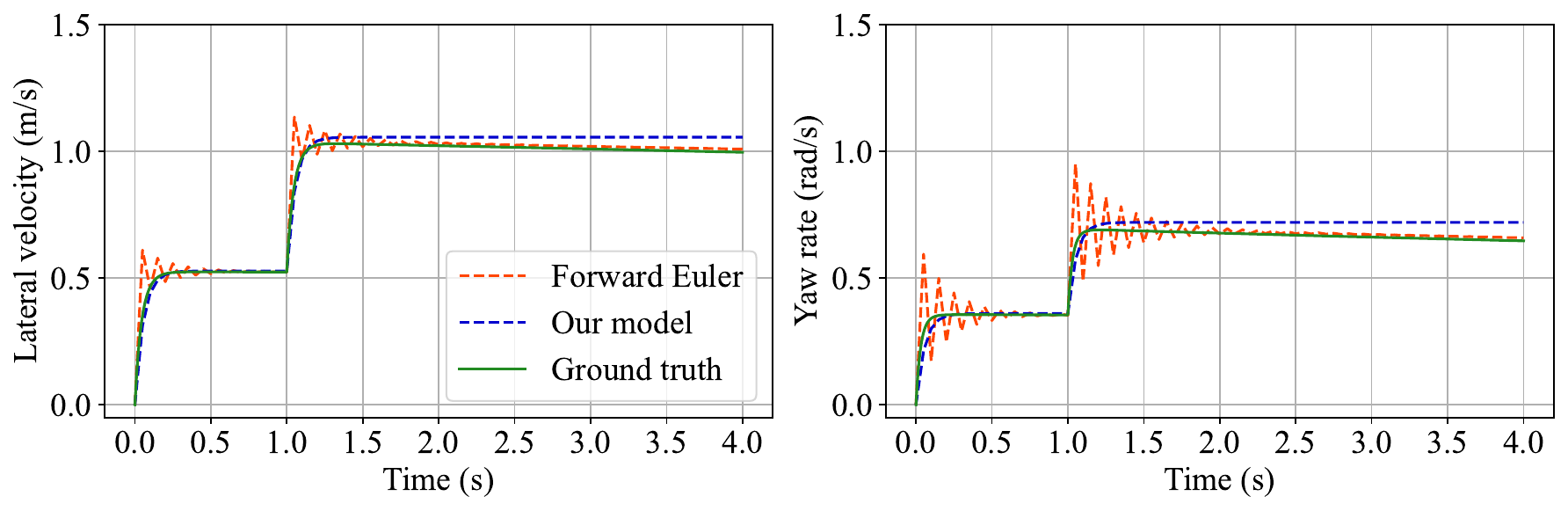}}\\
          \subfloat[$T_\text{s}$ = 0.1 s]{\label{ts0.1}\includegraphics[width=0.9\textwidth]{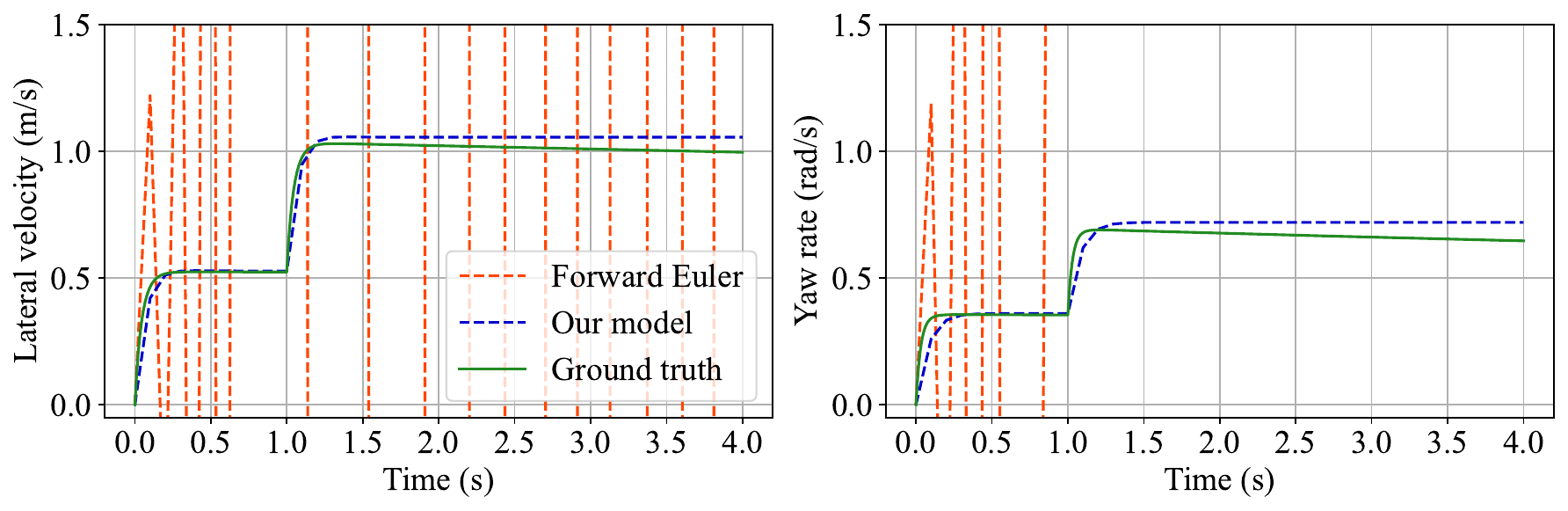}}\\
      \caption{Lateral states with varying discretization steps under two step steering input. $U_0$ = 8 m/s, the first step steering input $\delta_1$ = 0.1347 rad at 0.0 s, and the second steering input $\delta_2$ = 0.2674 rad at 1.0 s.}
  \label{back_vs_for}
\end{figure}

It can be seen from Figure \ref{back_vs_for} that our proposed model maintains numerical stability no matter the discrete time step $T_\text{s}$ is \qty{0.01}{s}, \qty{0.05}{s}, or \qty{0.1}{s}. It should be pointed out that in the process of applying the model to predictive controller design, due to the limited computing resources provided by the hardware, there is an upper bound in the prediction step of the actual nonlinear optimization problem scale, which means larger discrete time step can provide longer prediction time. The simulation results evident that our discretization method has great advantages over the forward Euler method in terms of numerical stability.

Then compare the accuracy with the kinematic model, we utilize \textit{CarSim} (employing the Adams-Moulton second-order method with a sampling time $T_\text{s}$ of 0.001s) as the reference model. Our simulations used a discrete time step of 0.001s and encompassed 25 different operational scenarios, spanning common longitudinal and lateral control ranges. Table \ref{tab: dyna_vs_kine} presents the specific errors relative to \textit{CarSim}. Notably, there is a considerable reduction in errors compared to the kinematic model, often exceeding $90\%$. This suggests that our vehicle model holds potential for enhancing the control accuracy of model-based controllers.
\begin{table}[htbp]
    \footnotesize
    \caption{RMS Error of trajectory under varying velocities and steer inputs.}
    \label{tab: dyna_vs_kine}
    \begin{center}
    \begin{tabular}{ccccc} 
    \toprule
    $U_0$ (\qty{}{m/s})&$\delta$ (\qty{}{rad})&Kinematic (\qty{}{m}) &Our model (\qty{}{m})&Accuracy improvement\\
    \midrule

   5 &  0.05 & 0.056& 0.015& +$74.31\%$\\
   5 &  0.10 & 0.119& 0.029& +$76.08\%$\\
   5 &  0.15 & 0.195& 0.042& +$78.59\%$\\
   5 &  0.20 & 0.290& 0.054& +$81.42\%$\\
   5 &  0.25 & 0.408& 0.064& +$84.24\%$\\
   \midrule
   10 &  0.05 & 0.488& 0.050& +$89.80\%$\\
   10 &  0.10 & 0.972& 0.095& +$90.22\%$\\
   10 &  0.15 & 1.449& 0.132& +$90.88\%$\\
   10 &  0.20 & 1.907& 0.158& +$91.71\%$\\
   10 &  0.25 & 2.332& 0.171& +$92.66\%$\\
   \midrule
   15 &  0.05 & 1.736& 0.096& +$94.46\%$\\
   15 &  0.10 & 3.315& 0.177& +$94.67\%$\\
   15 &  0.15 & 4.591& 0.229& +$95.02\%$\\
   15 &  0.20 & 5.451& 0.247& +$95.46\%$\\
   15 &  0.25 & 5.843& 0.235& +$95.98\%$\\
   \midrule
   20 &  0.05 & 4.229& 0.145& +$96.58\%$\\
   20 &  0.10 & 7.710& 0.253& +$96.71\%$\\
   20 &  0.15 & 9.868& 0.303& +$96.93\%$\\
   20 &  0.20 & 10.483& 0.293& +$97.21\%$\\
   20 &  0.25 & 9.941& 0.253& +$97.45\%$\\
   \midrule
   25 &  0.05 & 8.305& 0.188& +$97.73\%$\\
   25 &  0.10 & 14.391& 0.313& +$97.82\%$\\
   25 &  0.15 & 16.895& 0.345& +$97.96\%$\\
   25 &  0.20 & 16.065& 0.308& +$98.08\%$\\
   25 &  0.25 & 14.620& 0.282& +$98.07\%$\\
    \bottomrule
    \end{tabular}
    \end{center}
\end{table}

\subsection{Closed-loop control}
The closed-loop control is the design of a nonlinear model predictive controller (MPC) to perform the starting task. We set discrete time step $T_\text{s}$ = \qty{0.1}{s}, prediction step $N_\text{p}=20$, and control step $N_\text{c}=1$. In such a predictive controller, the receding-horizon self-correction will help compensate for the model error dynamically. We design a nonlinear model predictive controller as  (\ref{back_euler_bicycle}) to comprehensively verify its effect in practical application. Starting from $(0, 0)$, the vehicle model first tracks a direct path connecting its center of gravity (C.G.) and the target $(30, 30)$. The reference speed is set to be $\qty{6}{m/s}$. There is an obstacle (light orange round area in Figure \ref{closed_loop_traj}) located at $(15, 15)$ blocking its way. The parameters are deliberately co-designed so that the vehicle can not bypass with steering, but only stop to avoid collision. Note that collision is defined by overlapping of the obstacle and C.G. of the vehicle, instead of vehicle body profile. Once the vehicle stops, the obstacle is moved to $(18, 12)$ (dark orange round area in Figure \ref{closed_loop_traj}), giving way to the vehicle. Then the vehicle starts to execute a stop-start task. To be emphasized, obstacle avoidance requires the vehicle to steer and accelerate simultaneously, which is similar to a left-turning maneuver after red light at intersection.

The constrained nonlinear optimal control problem is formulated as (\ref{ocp}). The longitudinal velocity is shown in Figure \ref{velocity}, whereas the longitudinal input $a$ and lateral input $\delta$ are recorded in Figure \ref{acc} and \ref{steer} respectively. 

\begin{subequations}
    \begin{align}
        \min \sum_{k=0}^{N_p-1}\left(x_k-x_{\text{ref}, k}\right)^{\top} & Q\left(x_k-x_{\text{ref}, k}\right)+u_k^{\top} R u_k ,\\
        x_{k+1}&=f(x_k, u_k),\\
        \left(x_k-x_{\text{obs}, k}\right)^{\top} &Q_{\text{s}}\left(x_k-x_{\text{obs}, k}\right) \geq D_{\text{s}}^{2} ,\\
        x_{\text{min} } \leq x_k \leq x_{\text{max} } &,
        u_{\text{min} } \leq u_k \leq u_{\text{max} },
    \end{align}
    \label{ocp}
\end{subequations}
where $x_{\text{ref}, k}$ is the $k$th reference point in the prediction horizon. $x_{\text{obs}, k}$ conveys location of the obstacle at $k$th step in the prediction horizon. $Q$ = diag(100, 100, 0, 0, 0, 0), $R$ = diag(10, 500), $Q_{\text{s}}$ = diag(1, 1, 0, 0, 0, 0), $D_{\text{s}}$ = 8. $x_{\text{min}}$ = [-$\infty$, -$\infty$, -$\infty$, 0, -4, -3], $x_{\text{max}}$ = [+$\infty$, +$\infty$, +$\infty$, 20, 4, 3]. $u_{\text{min}}$ = [-5, -$\pi$/4], $u_{\text{max}}$ = [2, $\pi$/4]. All values are in the international system of units. 

\begin{figure}[htbp]
    \centering
    \captionsetup[subfigure]{justification=centering}
        \subfloat[Stop]{\label{stop}\includegraphics[width=0.4\textwidth]{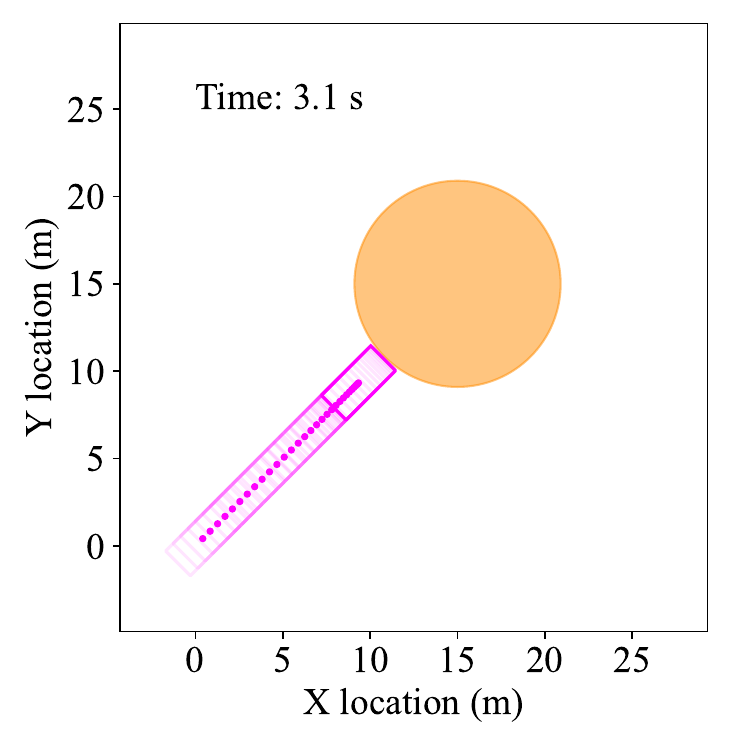}} \quad
        \subfloat[Start]{\label{go}\includegraphics[width=0.4\textwidth]{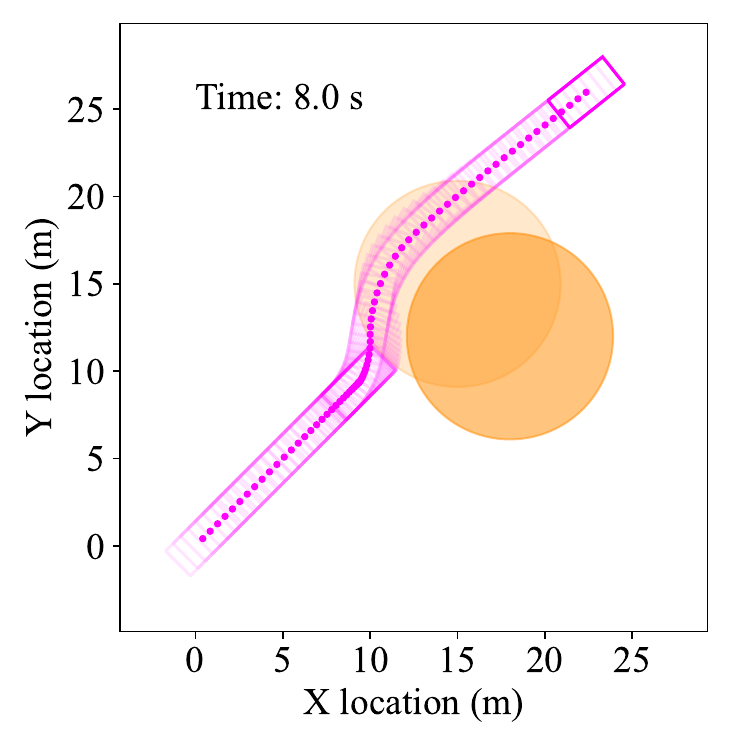}}
    \caption{Trajectory of the close-loop control process of a stop-start driving scenario. The obstacle changes position from $(15, 15)$ to $(18, 12)$ when the time is \qty{3.1}{s}.}
    \label{closed_loop_traj}
\end{figure}

\begin{figure}[H]
    \centering
    \captionsetup[subfigure]{justification=centering}
        \subfloat[Longitudinal velocity]{\label{velocity}\includegraphics[width=0.32\textwidth]{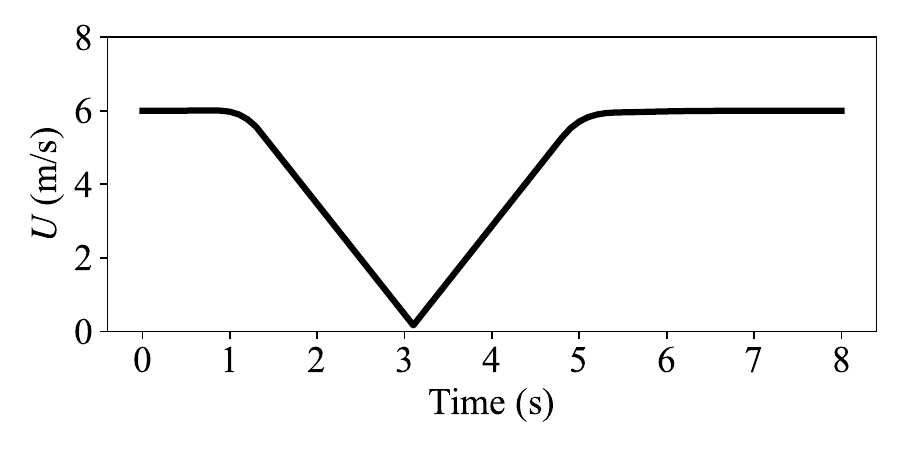}}
        \subfloat[Longitudinal action] {\label{acc}\includegraphics[width=0.32\textwidth]{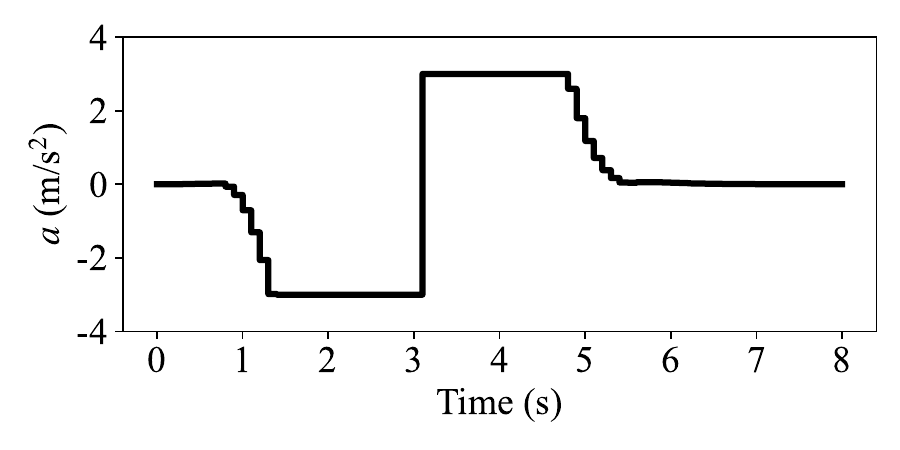}}
        \subfloat[Lateral action]{\label{steer}\includegraphics[width=0.32\textwidth]{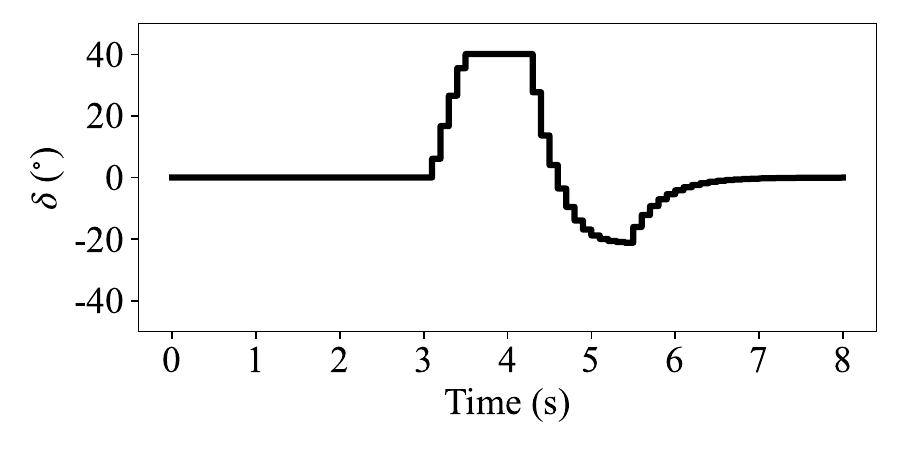}}\\
    \caption{Control input of the closed-loop control experiment.}
\label{closed_loop}
\end{figure}
\section{Real Vehicle
Experiments}\label{sec_experiment}
This section will further validate the performance of our proposed dynamic model through real vehicle experiments. The equipments, settings and results will be presented as follows.
\subsection{Experiment equipments}
In the experiment, the vehicle we adopt is Changan CS55 E-SUV. It is equipped with an ADLINK ROS-CUBE-I Industrial Personal Computer (IPC). 
What's more, the vehicle is equipped with perception and localization systems, which receives and records information at a frequency of \qty{10}{Hz}. For these sensors, the position accuracy is \qty{0.01}{m}, yaw angle accuracy is \qty{0.001}{rad}. The snapshots of the vehicle and the IPC are shown in Figure \ref{fig:real veh exp hardward}. Before the experiments, we identified the vehicle's parameters including $k_\text{f}$, $k_\text{r}$, $l_\text{f}$, $l_\text{r}$, $m$, and $I_\text{z}$ through preliminary trials and measurements. The specific numerical values are presented in Table \ref{tab: real paras}.

\begin{figure}[htbp]
    \centering
    \captionsetup[subfigure]{justification=centering}
        \subfloat[Experimantal vehicle]{\label{Experimantal vehicle}\includegraphics[width=0.4\textwidth]{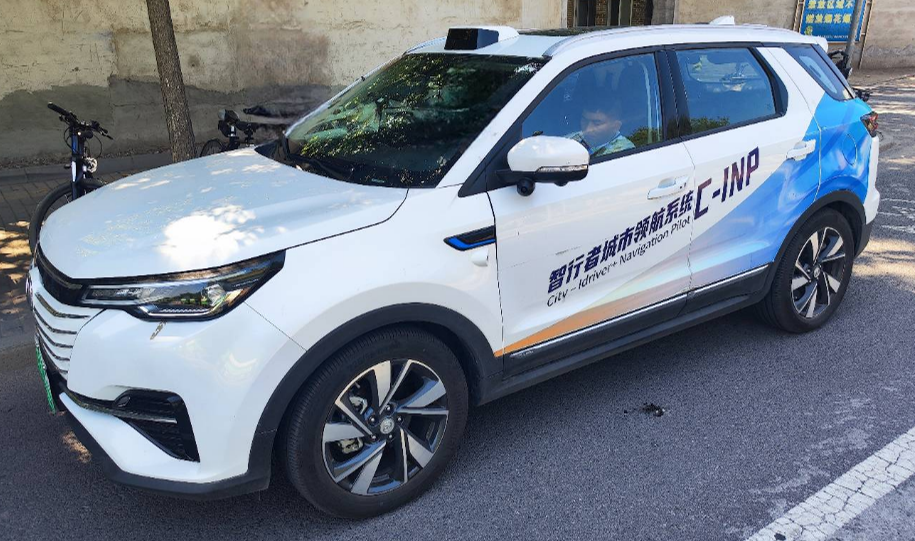}}
        \subfloat[Industrial personal computer]{\label{Industrial personal computer}\includegraphics[width=0.4\textwidth, trim={0cm, 0.8cm, 0cm, 0cm}, clip]{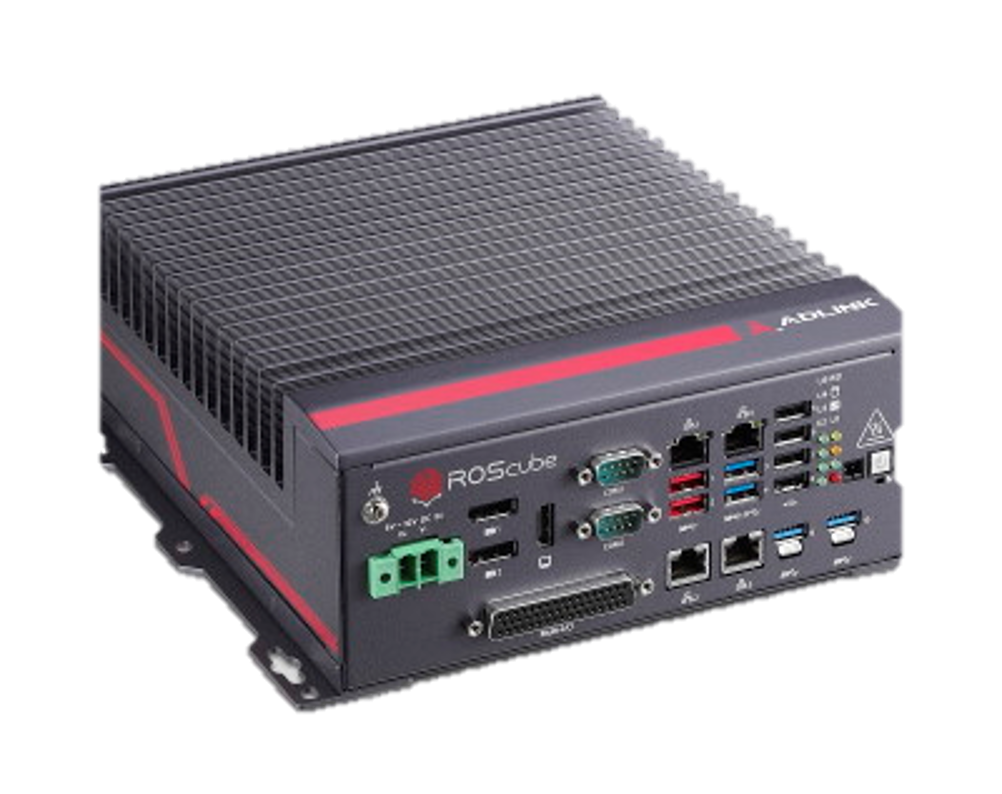}}\\
    \caption{Hardwares in real vehicle experiments}
    \label{fig:real veh exp hardward}
\end{figure}

\begin{table}[tbhp]
\footnotesize
\centering
\tbl{Vehicle model parameters in real vehicle experiment.}
{\begin{tabular}{lll}
\toprule
Parameter & Description & Value \\
\midrule
{$I_{z}$} & Yaw moment of inertia &  \qty{3058}{kg \cdot m^2}\\
{$k_\text{f}$} & Front axle equivalent sideslip stiffness & \qty{-186000}{N/rad} \\
{$k_\text{r}$} & Rear axle equivalent sideslip stiffness & \qty{-183000}{N/rad}\\
{$l_\text{f}$} & Distance between C.G. and front axle & \qty{1.4}{m}\\
{$l_\text{r}$} & Distance between C.G. and rear axle & \qty{1.5}{m}\\
{$m$} & Mass of the vehicle & \qty{1892}{kg}\\
$T_\text{s}$ & Discrete time step & \qty{0.1}{s}\\
\bottomrule
\end{tabular}}
\label{tab: real paras}
\end{table}

\subsection{Experiment settings}
Our vehicle experiments are conducted in open-loop control manner, issuing motion commands to the chassis through the IPC. We firstly validate the proposed dynamic model from the longitudinal aspect. Building upon this foundation, we then proceed to validate the accuracy of the dynamic model from the combined longitudinal and lateral aspects. In the longitudinal experimental series, we maintain a zero steering angle to ensure straight-line driving of the vehicle. Afterwards, the vehicle accelerates from zero to its maximum velocity, followed by deceleration back to zero. Based on variations in acceleration, deceleration, and maximum velocity, a total of six experiments were conducted as outlined in Table \ref{tab: real_exp_setting lon}. In the lateral experimental series, we orchestrate the vehicle's acceleration from zero to its maximum velocity while applying a constant steering angle throughout the acceleration phase. Based on variations in acceleration, steering angle, and maximum velocity, a total of eight experiments were conducted as delineated in Table \ref{tab: real_exp_setting lat}.

\subsection{Experiment results}
For a fair comparison between real vehicle and our proposed vehicle model, we utilize the actually measured values of acceleration, and front-wheel steering angle obtained from sensors to implement open-loop control simulation. The discrete time step of vehicle simulation is \qty{0.1}{s}, which matches the sampling frequency of the real vehicle state information. To show the effectiveness of fitting, we calculated the RMS error of the real trajectory and the simulated trajectory for each experiment. For the longitudinal experimental series, these results are presented in Table \ref{tab: real_exp_setting lon}, while the results of the lateral experiments are presented in Table \ref{tab: real_exp_setting lat}. Moreover, to give a better understanding of the effectiveness, we provide visualizations for two typical cases from the longitudinal and lateral results, respectively.

\begin{table}[htbp]
    \footnotesize
    \centering
    \caption{Longitudinal real vehicle experiments}
    \label{tab: real_exp_setting lon}
    \begin{tabular}{ccccc}
    \toprule
    \textbf{Id} &\textbf{Acceleration ($\qty{}{m/s^2}$)}  &\textbf{Deceleration ($\qty{}{m/s^2}$)} &\textbf{Maximum velocity ($\qty{}{m/s}$)}& \textbf{RMS error ($\qty{}{m}$)}\\
    \midrule
    Lon-1 & 1.0 & -2.0 & 6  & 0.641\\
    Lon-2 & 1.5 & -1.0 & 6  & 0.362\\
    Lon-3 & 0.5 & -3.0 & 10 & 1.358\\
    Lon-4 & 1.0 & -3.0 & 10 & 1.740\\
    Lon-5 & 2.0 & -3.0 & 10 & 1.329\\
    Lon-6 & 1.5 & -3.0 & 10 & 1.707\\
    \bottomrule
    \end{tabular}
\end{table}

\begin{table}[htbp]
    \footnotesize
    \centering
    \caption{Lateral real vehicle experiments}
    \label{tab: real_exp_setting lat}
    \begin{tabular}{ccccc}
    \toprule
    \textbf{Id} &\textbf{Acceleration ($\qty{}{m/s^2}$)} &\textbf{Maximum velocity ($\qty{}{m/s}$)}&\textbf{Steering angle ($\qty{}{rad}$)}& \textbf{RMS error ($\qty{}{m}$)}\\
    \midrule
    Lat-1 & 1.5 & 3  & 0.0210 & 0.561\\
    Lat-2 & 1.5 & 3  & 0.0526 & 0.938\\
    Lat-3 & 1.5 & 3  & 0.1051 & 0.753\\
    Lat-4 & 1.5 & 3  & 0.2103 & 0.687\\
    Lat-5 & 1.5 & 6  & 0.0210 & 1.499\\
    Lat-6 & 1.5 & 6  & 0.0526 & 1.267\\
    Lat-7 & 1.5 & 6  & 0.1051 & 1.105\\
    Lat-8 & 1.5 & 10 & 0.0526 & 2.618\\
    \bottomrule
    \end{tabular}
\end{table}

 As shown in Figure \ref{fig: real_exp_lon} and \ref{fig: real_exp_lat}, the real vehicle longitudinal velocity and the longitudinal velocity simulated by the dynamic model are plotted together on one subfigure, while the real vehicle trajectory and the velocity simulated by the our model are illustrated on another subFigure Based on the results, it is evident that the proposed dynamic model in this paper accurately captures the vehicle's motion across all different cases. 
 This alignment between the model and real vehicle motion reinforces the remarkable superiority of our proposed dynamic model, especially evident in low-speed stop-start scenarios. While there is a trend of increased RMS errors between the real and simulated trajectories as the maximum vehicle velocity and steering angle rise, the RMS errors remain within an acceptable range.

\begin{figure}[H]
      \centering
      \captionsetup[subfigure]{justification=centering}
          \subfloat[Lon-1 longitudinal velocity]{\label{Lon-1 velocity}\includegraphics[width=0.5\textwidth]{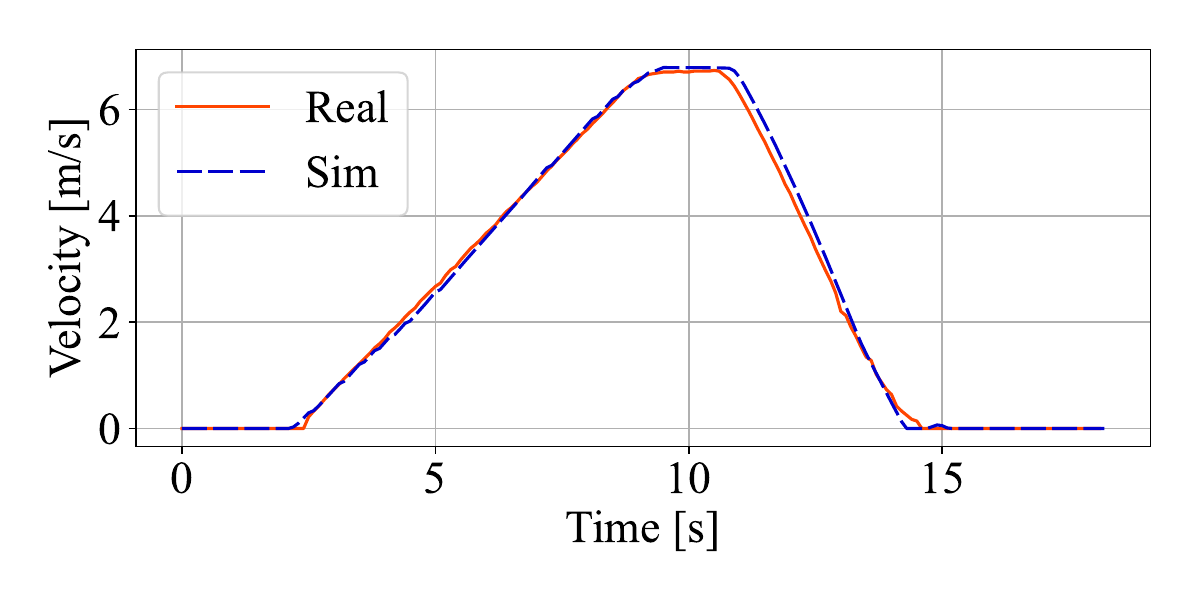}}
          \subfloat[Lon-1 trajectory]{\label{Lon-1 trajectory}\includegraphics[width=0.5\textwidth]{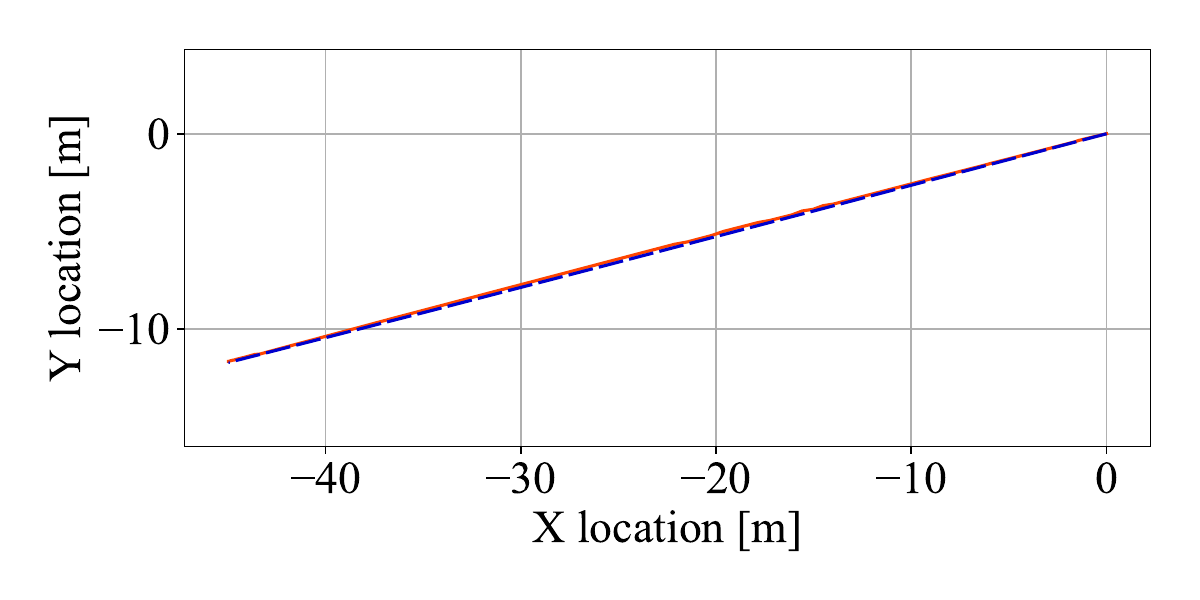}}\\
          \subfloat[Lon-5 longitudinal velocity]{\label{Lon-2 velocity}\includegraphics[width=0.5\textwidth]{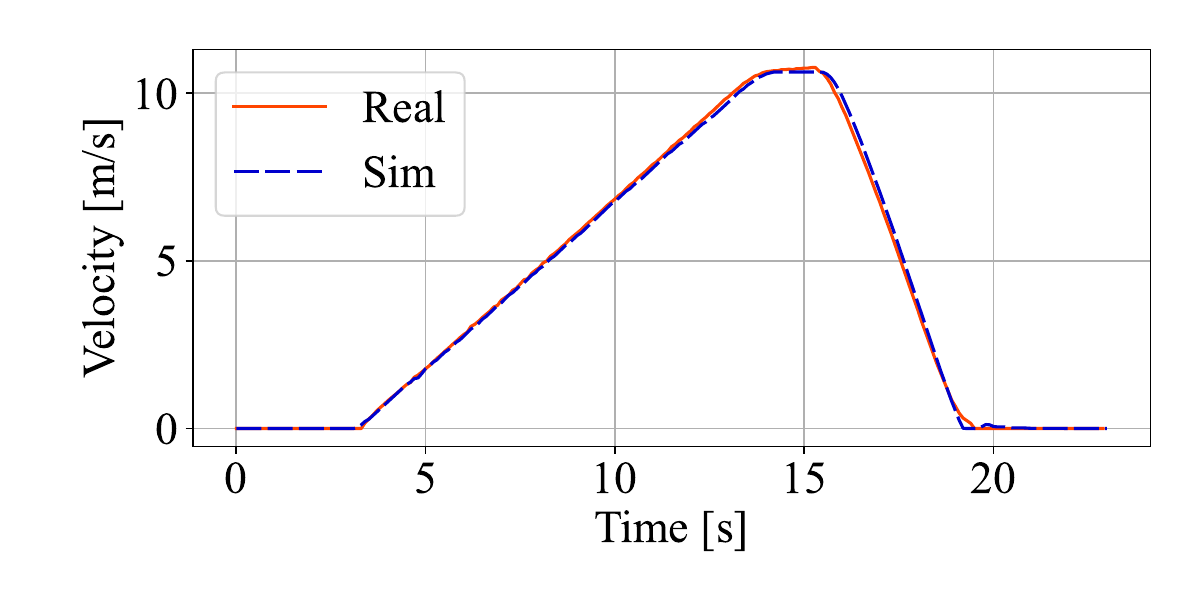}}
          \subfloat[Lon-5 trajectory]{\label{Lon-2 trajectory}\includegraphics[width=0.5\textwidth]{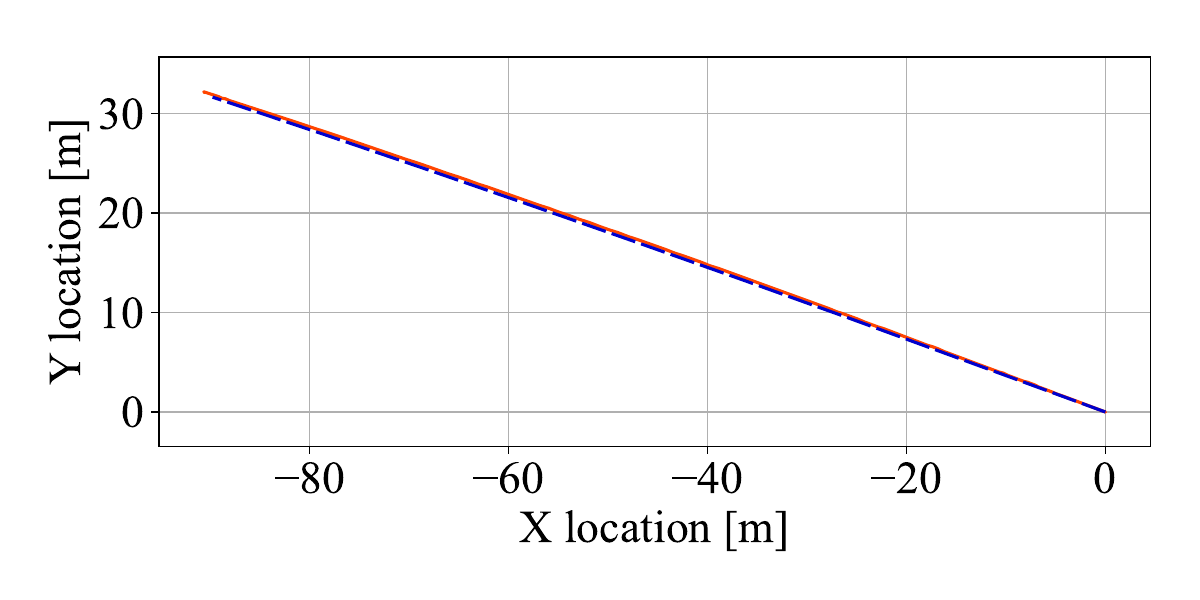}}\\
      \caption{{Longitudinal real vehicle experiments}}
  \label{fig: real_exp_lon}
\end{figure}

\begin{figure}[H]
      \centering
      \captionsetup[subfigure]{justification=centering}
          \subfloat[Lat-1 longitudinal velocity]{\label{Lat-2 velocity}\includegraphics[width=0.5\textwidth]{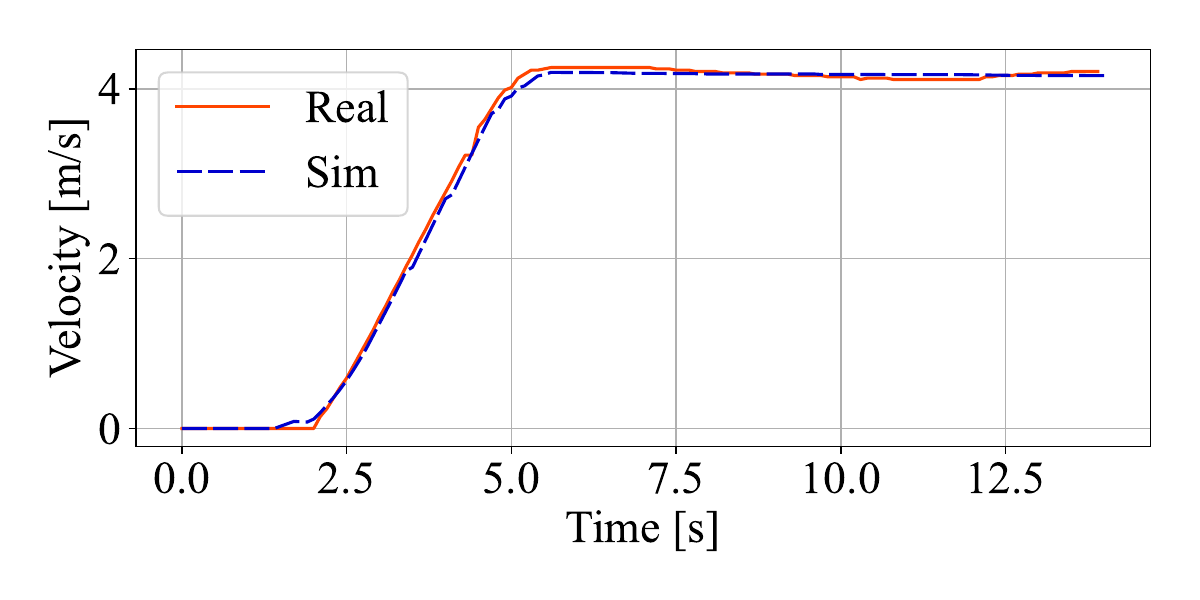}}
          \subfloat[Lat-1 trajectory]{\label{Lat-2 trajectory}\includegraphics[width=0.5\textwidth]{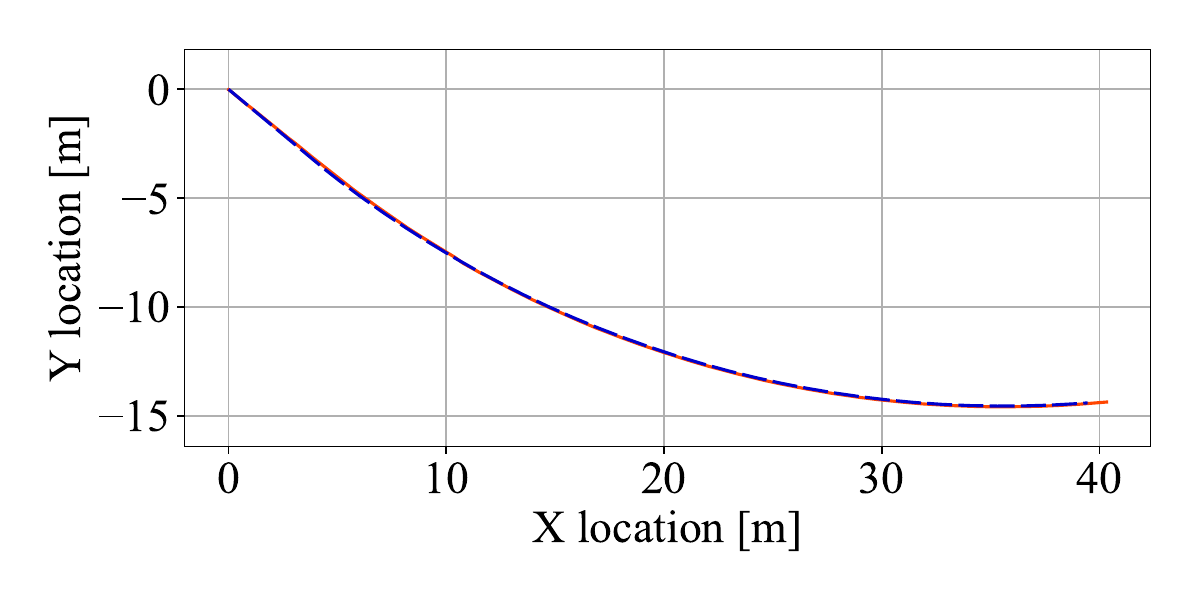}}\\
          \subfloat[Lat-2 longitudinal velocity]{\label{Lat-12 velocity}\includegraphics[width=0.5\textwidth]{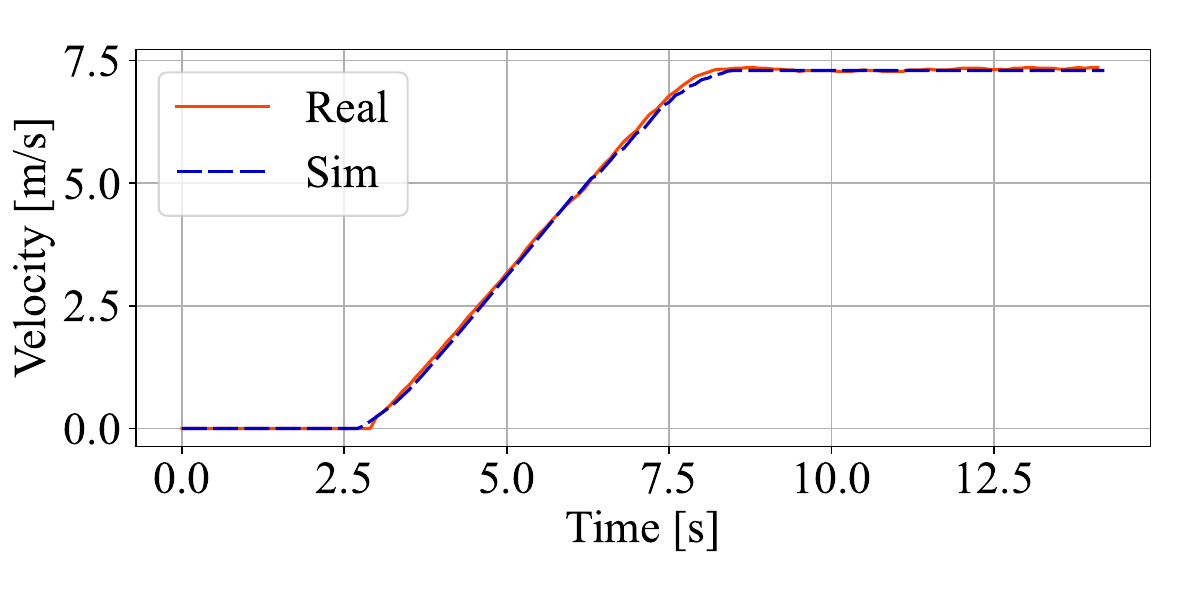}}
          \subfloat[Lat-2 trajectory]{\label{Lat-12 trajectory}\includegraphics[width=0.5\textwidth]{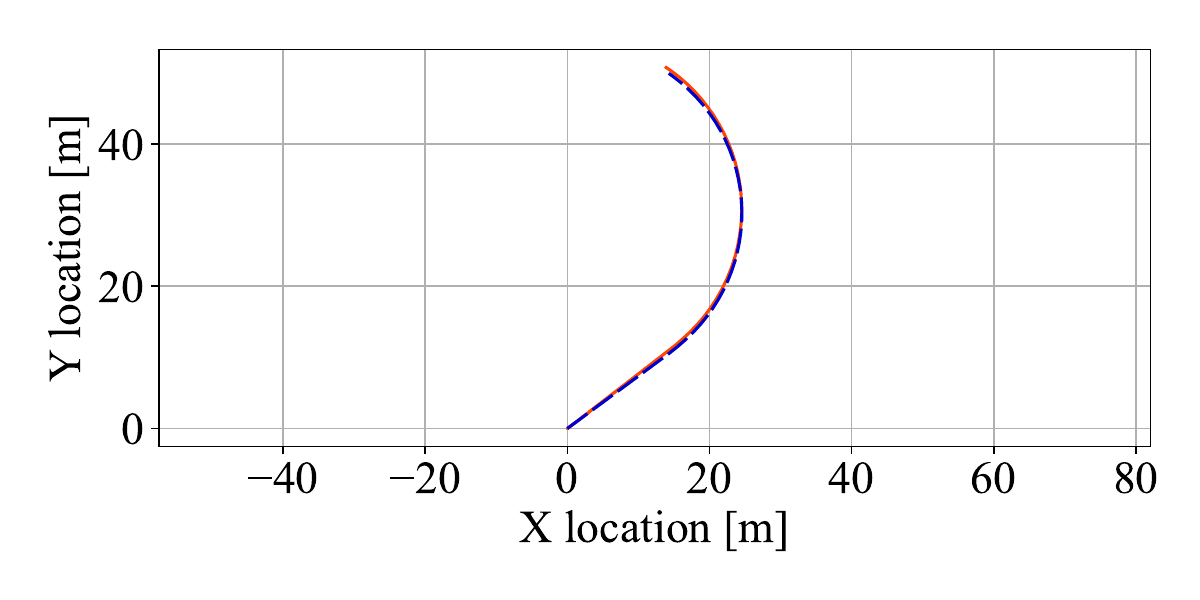}}\\
      \caption{{Lateral real vehicle experiments}}
  \label{fig: real_exp_lat}
\end{figure}
\section{Conclusion}\label{sec_conclusion}

In this paper, we propose an explicit discretized dynamic vehicle model and provide a sufficient condition that guarantees its numerical stability. Our proposed model outperforms forward-Euler discretized dynamic model in terms of stability and the kinematic model in terms of accuracy. Furthermore, the real vehicle experiments are further conducted to demonstrate that our proposed model can closely aligns to the real vehicle trajectories showcasing its practicality and ease of use. In conclusion, these simulations and experiments empirically support that our proposed explicit dynamic vehicle model is user-friendly and practically suitable for predictive controller design, model-based learning algorithm, etc., under general driving scenarios including
low-speed and stop-start.

\section*{Funding}
This work was supported by NSF China under 52221005, Tsinghua University Initiative Scientific Research Program, and Tsinghua-Toyota Joint Research Institute Inter-disciplinary Program.

\end{document}